\DeclareSymbolFont{rsfscript}{OMS}{rsfs}{m}{n}
\DeclareSymbolFontAlphabet{\mathrsfs}{rsfscript}
\begin{document}
\title{Synchronizing Automata\\with Extremal Properties}
\author{Andrzej Kisielewicz\thanks{Supported in part by Polish MNiSZW grant IP 2012 052272.}
\and Marek Szyku{\l}a\thanks{Supported in part by Polish NCN grant DEC-2013/09/N/ST6/01194.}}
\institute{Department of Mathematics and Computer Science, University of Wroc{\l}aw
\email{andrzej.kisielewicz@math.uni.wroc.pl,\ msz@cs.uni.wroc.pl}}
\maketitle

\begin{abstract}
We present a few classes of synchronizing automata exhibiting certain extremal properties with regard to synchronization. 
The first is a series of automata with subsets whose shortest extending words are of length $\varTheta(n^2)$, where $n$ is the number of states of the automaton.
This disproves a conjecture that every subset in a strongly connected synchronizing automaton is $cn$-extendable, for some constant $c$, and in particular, shows that the cubic upper bound on the length of the shortest reset words cannot be improved generally by means of the extension method.  
A detailed analysis shows that the automata in the series have subsets that require words as long as $n^2/4+O(n)$ in order to be extended by at least one element.

We also discuss possible relaxations of the conjecture, and propose the image-extension conjecture, which would lead to a quadratic upper bound on the length of the shortest reset words. In this regard we present another class of automata, which turn out to be counterexamples to a key claim in a recent attempt to improve the Pin-Frankl bound for reset words.

Finally, we present two new series of slowly irreducibly synchronizing automata over a ternary alphabet, whose lengths of the shortest reset words are $n^2-3n+3$ and $n^2-3n+2$, respectively. These are the first examples of such series of automata for alphabets of size larger than two.
\end{abstract}

%%%%%%%%%%%%%%%%%%%%%%%%%%%%%%%%%%%%%%%%%%%%%%%%%%%%%%%%%%%%%%%%%%%%%%%%%%%%%%%%%%%%%%%%%%%%%%%%%%%%%%%%%%%%%%%%%%%%%%%%

\section{Introduction}

In this paper we deal with \emph{deterministic finite (semi)automata} (\emph{DFA}) $\mathrsfs{A} = (Q, \Sigma, \delta)$, where $Q$ is a non-empty
\emph{set of states}, $\Sigma$ is a non-empty \emph{alphabet}, and $\delta\colon Q \times \Sigma \mapsto Q$ is the complete
\emph{transition function}. We extend $\delta$ to $Q \times \Sigma^*$ and $2^Q \times \Sigma^*$ in a natural way.
The image $\delta(S,w)$ is denoted shortly by $Sw$, and the preimage $\delta^{-1}(S,w) = \{q \in Q\mid qw \in S\}$ is denoted by $Sw^{-1}$.
Throughout the paper, by $n$ we denote the cardinality $|Q|$.
The \emph{rank} of a word $w \in \Sigma^*$ is the cardinality $|Qw|$. A word $w$ of rank 1 is called a \emph{synchronizing word}. An automaton for which there exists a synchronizing word is called \emph{synchronizing}.
The \emph{reset length} is the length of the shortest reset words.
A word $w$ \emph{compresses} a subset $S \subseteq Q$ if $|Sw| < |S|$, and a subset that admits such a word is called \emph{compressible}.

The famous \v{C}ern\'y conjecture states that every synchronizing automaton $\mathcal{A}$ with $n$ states has a reset word of length $\leq (n-1)^2$. This conjecture was formulated by \v{C}ern\'y in~1964 \cite{Cerny1964}, and is considered the longest-standing open problem in combinatorial theory of finite automata. So far, the conjecture has been proved only for a few special classes of automata, and the best general upper bound proven is $\frac{n^3-n}{6}-1$ ($n\ge 4)$. It is known that it is enough to prove the conjecture for strongly connected automata. We refer to~\cite{KariVolkov2013Handbook,Volkov2008Survey} for excellent surveys discussing recent results.

Many partial results towards the solution of the \v{C}ern\'{y} conjecture use the so-called \emph{extension method}\index{extension method} (e.g.~\cite{BBP2011QuadraticUpperBoundInOneCluster,BerlinkovSzykula2015AlgebraicSynchronizationCriterion,Berlinkov2013QuasiEulerianOneCluster,Dubuc1998,Kari2003Eulerian,Rystsov1995QuastioptimalBound,Steinberg2011AveragingTrick,Steinberg2011OneClusterPrime}).
The essence of this method is to look for a sequence of (short) words $w_1,\ldots,w_{n-1}$ such that $S_1=\{q\}$ for some $q\in Q$, $S_{k+1}=S_k w_k^{-1}$,  and $|S_{k+1}| > |S_k|$ for $k\ge 1$. Then, necessarily, $S_n=Q$, and the word $w_{n-1}\ldots w_2w_1$ is synchronizing. (It may happen that a shorter sequence of words does the job.) In other words, we try to extend subsets of $Q$, starting from a singleton, rather than to compress subsets starting from $Q$. In connection with this method we say that $S\subseteq Q$ is \emph{$m$-extendable} if there is a word $w$ of length at most $m$ with $|Sw^{-1}| > |S|$. An automaton $\mathcal{A}$ is $m$-extendable, if each subset of its states is $m$-extendable. For example, Kari \cite{Kari2003Eulerian} proved that automata whose underlying digraph is Eulerian are $n$-extendable. A simple calculation shows that such automata satisfy the \v{C}ern\'{y} conjecture.

Unfortunately, it is not true, in general, that each strongly connected synchronizing automaton is $n$-extendable. In \cite{Berlinkov2011OnConjectureCarpiDAlessandro} Berlinkov has constructed a series of strongly connected synchronizing automata with subsets for which the shortest extending words are of length $2n-3$. Thus they are not $cn$-extendable for any constant $c < 2$.
However, it remained open whether each synchronizing automaton is $2n$-extendable. The question was also posed in~\cite{KariVolkov2013Handbook}.
Its importance comes from the fact, that it would imply a quadratic upper bound for the length of the shortest reset words, and many proofs of the \v{C}ern\'{y} conjecture for special cases rely on the fact that subsets are extendable by short words.

In Section~\ref{sec:quadratically_extendable_subsets} we present a series $\mathrsfs{A}_{2m-1}$ of strongly connected synchronizing automata with $n=2m-1$ states that are not $cn$-extendable for any constant $c$.
We show that they have subsets whose shortest extending words have length $n^2/4+O(n)$.
It follows that, in general, the extension method cannot be used to improve the cubic upper bound.
On the other hand, we propose a possible weakening of the condition of $m$-extendability, which would imply quadratic upper bounds. To prove some lower bound on the constant in our conjecture, a series of automata $\mathrsfs{B}_{2m}$ is presented, which turn out to be also connected with a recent attempt to improve the Pin-Frankl bound \cite{Tr2011ModifyingUpperBound}. It provides counterexamples of arbitrary order to the false statement in~\cite[Lemma 3]{Tr2011ModifyingUpperBound}, in addition to the single counterexample on 4 states constructed in~\cite{GJT2014ANoteOnARecentAttempt}.

Automata with reset lengths close to the \v{C}ern\'y bound $(n-1)^2$ are referred to as \emph{slowly synchronizing}.
Such extremal series and particular examples of automata are of special interest, and there are only a few of them known \cite{AGV2010,AGV2013,AVZ2006,GusevPribavkina2014ResetThresholdsOfAutomataWithTwoCycleLengths,Roman2008ANote,Tr2006Trends}.
The first known such series is the famous \v{C}ern\'y series \cite{Cerny1964}, reaching the \v{C}ern\'y bound $(n-1)^2$.
The authors of~\cite{AGV2010} found 10 infinite series of extremal automata other than the \v{C}ern\'y series over binary alphabets.
Yet, until now, no non-trivial series of synchronizing automata with reset length $n^2+O(n)$ have been found for larger alphabets.

Of course, we are not interested in examples that can be obtained just by adding arbitrary letter to binary slowly synchronizing automata. 
We call an automaton \emph{irreducibly synchronizing}, if removing any letter yields a non-synchronizing automaton.
In Section~\ref{sec:slowly_synchronizing_3} we present two such series over a ternary alphabet, with the reset lengths $n^2-3n+3$ and $n^2-3n+2$, respectively.

%%%&&&&&&&&&&&&&&&&&&&&&&&&

\section{A Series with Quadratically Extendable Subsets}\label{sec:quadratically_extendable_subsets}

For $m \ge 3$, let $n = 2m-1$. Let $\mathrsfs{A}_{2m-1} = \langle Q_{2m-1}, \{a,b\}, \delta_{2m-1} \rangle$ be the automaton shown in Figure~\ref{fig:backward_hard_odd}. $Q_{2m-1} = \{q_1,\ldots,q_{2m-1}\}$ and $\delta_{2m-1}$ is defined as follows:
{\small
$$\delta_{2m-1}(q_i,a) = \begin{cases}
q_{1}, & \text{ if $i = m$,} \\
q_{m+1}, & \text{ if $i = 2m-1$,} \\
q_{i+1}, & \text{ otherwise,} \\
\end{cases}\quad
\delta_{2m-1}(q_i,b) = \begin{cases}
q_{i}, & \text{ if $1 \le i \le m-1$,} \\
q_{2m-1}, & \text{ if $i = m$,} \\
q_{m}, & \text{ if $i = 2m-1$,} \\
q_{i-m}, & \text{ otherwise.} \\
\end{cases}$$}

\begin{figure}
\unitlength 26pt
\begin{center}\begin{picture}(8,5)(0,0)
\gasset{Nh=1,Nw=1,Nmr=1,ELdist=0.25,loopdiam=0.5}
\node(vm)(8,2){$q_m$}
\node(v1)(0,2){$q_1$}
\node[Nframe=n](vdots1)(2,2){$\dots$}
\node(vm-2)(4,2){$q_{m-2}$}
\node(vm-1)(6,2){$q_{m-1}$}
\drawedge[curvedepth=2,ELdist=-0.25](vm,v1){$a$}
\drawedge(v1,vdots1){$a$}
\drawedge(vdots1,vm-2){$a$}
\drawedge(vm-2,vm-1){$a$}
\drawedge(vm-1,vm){$a$}
\node(v2m-1)(8,4){$q_{2m-1}$}
\node(vm+1)(0,4){$q_{m+1}$}
\node[Nframe=n](vdots2)(2,4){$\dots$}
\node(v2m-2)(4,4){$q_{2m-2}$}
\drawedge[curvedepth=-1.2,ELdist=-0.25](v2m-1,vm+1){$a$}
\drawedge(vm+1,vdots2){$a$}
\drawedge(vdots2,v2m-2){$a$}
\drawedge(v2m-2,v2m-1){$a$}
\drawloop[loopangle=-90,ELdist=0.1](v1){$b$}
\drawloop[loopangle=-90,ELdist=0.1](vm-2){$b$}
\drawloop[loopangle=-90,ELdist=0.1](vm-1){$b$}
\drawedge[curvedepth=0.2](v2m-1,vm){$b$}
\drawedge[curvedepth=0.2](vm,v2m-1){$b$}
\drawedge(vm+1,v1){$b$}
\drawedge(v2m-2,vm-2){$b$}
\end{picture}\end{center}
\caption{The automaton $\mathrsfs{A}_{2m-1}$.}\label{fig:backward_hard_odd}
\end{figure}

It is easily verified that $\mathrsfs{A}_{2m-1}$ is strongly connected. Let us define $Q_U = \{q_{m+1},\ldots,q_{2m-1}\}$ and $Q_D = \{q_1,\ldots,q_m\}$.
First we show that $\mathrsfs{A}_{2m-1}$ is synchronizing by a word of length $2m^2-2m+2 = (n^2+3)/2$.

\begin{proposition}\label{pro:extend_synchronizing}
The word $b a b a^m b (a^{m-1} b a^m b)^{m-2}$ synchronizes $\mathrsfs{A}_{2m-1}$ to the state $q_1$.
\end{proposition}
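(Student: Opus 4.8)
The plan is a direct orbit computation: I would follow the image of a subset of $Q$ through the word, letter by letter, relying on the structure of the two generators. The letter $a$ is a permutation, namely the $m$-cycle $(q_1\,q_2\,\cdots\,q_m)$ on $Q_D$ together with the $(m-1)$-cycle $(q_{m+1}\,q_{m+2}\,\cdots\,q_{2m-1})$ on $Q_U$; since $m\equiv 1\pmod{m-1}$, the power $a^m$ acts as the identity on $Q_D$ and as one forward step on $Q_U$, while $a^{m-1}$ acts as the identity on $Q_U$ and as one backward step on $Q_D$. The letter $b$ fixes $q_1,\dots,q_{m-1}$, interchanges $q_m$ and $q_{2m-1}$, and sends $q_{m+j}\mapsto q_j$ for $1\le j\le m-2$; informally, $b$ ``folds'' each top state $q_{m+1},\dots,q_{2m-2}$ onto the bottom state directly below it.

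First I would treat the prefix $b\,a\,b\,a^m\,b$. Applying its letters to $Q$ in turn gives $Qb=\{q_1,\dots,q_m,q_{2m-1}\}$, then $Qba=\{q_1,\dots,q_m,q_{m+1}\}$, then $Qbab=\{q_1,\dots,q_{m-1},q_{2m-1}\}$, then $Qbaba^m=\{q_1,\dots,q_{m-1},q_{m+1}\}$ (only $q_{2m-1}$ moves, by the forward step on $Q_U$), and finally $Qbaba^mb=\{q_1,\dots,q_{m-1}\}$, the last $b$ folding $q_{m+1}$ onto $q_1$. Thus after the prefix the reachable set is the ``interval'' $I_{m-1}:=\{q_1,\dots,q_{m-1}\}$, of size $m-1$.

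The heart of the argument is the following claim: writing $u:=a^{m-1}\,b\,a^m\,b$ and $I_k:=\{q_1,\dots,q_k\}$, one has $I_k\,u=I_{k-1}$ for every $k$ with $2\le k\le m-1$. This is a four-step trace: $a^{m-1}$ turns $I_k$ into $\{q_m\}\cup I_{k-1}$ (the backward step on $Q_D$, which sends $q_1$ to $q_m$), then $b$ fixes $I_{k-1}$ and sends $q_m$ to $q_{2m-1}$, then $a^m$ fixes $I_{k-1}$ and sends $q_{2m-1}$ to $q_{m+1}$, and then $b$ fixes $I_{k-1}$ and folds $q_{m+1}$ onto $q_1$, leaving exactly $I_{k-1}$; the hypothesis $k\le m-1$ is precisely what keeps every index appearing in $I_{k-1}$ at most $m-2$, so that both occurrences of $b$ fix it. Granting the claim, a one-line induction gives $I_{m-1}\,u^{\,j}=I_{m-1-j}$ for $0\le j\le m-2$ (passing from $u^{j}$ to $u^{j+1}$ uses the claim with $k=m-1-j$, which lies in $[2,m-1]$ exactly for $j=0,\dots,m-3$), so $I_{m-1}\,u^{\,m-2}=I_1=\{q_1\}$. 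Combining this with the prefix computation shows that $b\,a\,b\,a^m\,b\,(a^{m-1}ba^mb)^{m-2}$ maps $Q$ to $\{q_1\}$, and its length is $(m+4)+(m-2)(2m+1)=2m^2-2m+2=(n^2+3)/2$. I expect no real obstacle here beyond the index bookkeeping: once the actions of $a^m$ and $a^{m-1}$ on the two orbits are pinned down, everything is elementary, the one place to be careful being not to confuse the two cycle lengths $m$ and $m-1$, on which the whole effect hinges.
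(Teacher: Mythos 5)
Your proof is correct and follows essentially the same route as the paper's: compute $Qbaba^mb=\{q_1,\dots,q_{m-1}\}$ and then show that each factor $a^{m-1}ba^mb$ maps $\{q_1,\dots,q_k\}$ to $\{q_1,\dots,q_{k-1}\}$ via the detour $q_1\mapsto q_m\mapsto q_{2m-1}\mapsto q_{m+1}\mapsto q_1$. The only difference is that you spell out the orbit structure of $a$ and the length check explicitly, which the paper leaves implicit.
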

\begin{proof}
First observe that $Q b a b a^m b = Q_D \setminus \{q_m\}$.
It suffices to ensure that if $S = \{q_1,\ldots,q_i\}$ for $i=2,\ldots,m-1$, then $S a^{m-1} b a^m b = \{q_1,\ldots,q_{i-1}\}$.
Indeed, $S a^{m-1} = \{q_1,\ldots,q_{i-1},q_m\}$; then $q_m b a^m b = q_1$ and $\{q_1,\ldots,q_{i-1}\}$ is mapped by the action of $b a^m b$ to itself.
\qed
\end{proof}

%Exact length:
%$$e(m) = \begin{cases}
%m^2 - 3m/2 + 4 & \text{if $m$ is even;}\\
%m^2 - m +2 & \text{if $m$ is odd.}
%\end{cases}$$

The following result allows us to refute the hypothesis that there exists a constant $c>0$ such that each subset in a synchronizing automaton is $cn$-extendable. We first prove a weaker estimation of order $\varTheta(n^2)$, because it also allows to refute the hypothesis, while it has a simpler proof.

%For any $S \subseteq Q$ we say that $q_i \in S \cap Q_U$ is \emph{covered} (in $S$) if $q_i = q_{2m-1}$ and $q_m \in S$, or $q_i \neq q_{2m-1}$ and $q_{i-m} \in S$.
For any $S \subseteq Q$ we say that $q_i \in S \cap Q_U$ is \emph{covered} (in $S$) if $q_i b \in S$.

\begin{lemma}\label{lem:extend_S}
Let $S \subsetneq Q$ be a non-empty set, whose states from $S \cap Q_U$ are all covered in $S$.
Then a shortest word $w$ such that $|Sw^{-1} \cap Q_D| > |S|$ has length at least $m$.
\end{lemma}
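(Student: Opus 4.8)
The plan is to argue by contradiction. Suppose $w$ is a \emph{shortest} word with $|Sw^{-1}\cap Q_D|>|S|$ and that $|w|=\ell\le m-1$; I will derive $\ell\ge m$. Write $w=u_1u_2\cdots u_\ell$ and, for $0\le k\le\ell$, let $R_k=S(u_{\ell-k+1}\cdots u_\ell)^{-1}$ be the preimage of $S$ under the length\nobreakdash-$k$ suffix of $w$, so that $R_0=S$, $R_\ell=Sw^{-1}$, and $R_k=R_{k-1}u_{\ell-k+1}^{-1}$. By minimality of $w$ we have $|R_k\cap Q_D|\le|S|$ for every $k<\ell$, whereas $|R_\ell\cap Q_D|>|S|$.

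Next I would record how a single inverse letter acts on the quantity $|R\cap Q_D|$. Since $a$ permutes $Q_D$, we have $|Ra^{-1}\cap Q_D|=|R\cap Q_D|$. Since $b$ fixes $q_1,\dots,q_{m-1}$, sends $q_m$ to $q_{2m-1}$ and $q_{2m-1}$ to $q_m$, a direct computation gives $|Rb^{-1}\cap Q_D|=|R\cap Q_D|+[q_{2m-1}\in R]-[q_m\in R]$, a change in $\{-1,0,1\}$. Hence the jump at the last step forces the first letter $u_1$ of $w$ to be $b$, forces $|R_{\ell-1}\cap Q_D|=|S|$ exactly (it is $\le|S|$ by minimality and the increment is at most $1$), and forces $q_{2m-1}\in R_{\ell-1}$ and $q_m\notin R_{\ell-1}$; call this last property \emph{plus state}. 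Note that the covering hypothesis says precisely that $qb\in S$ for every $q\in S\cap Q_U$; in particular $q_{2m-1}\in S$ would imply $q_m=q_{2m-1}b\in S$, so $R_0=S$ is not in plus state, and therefore $\ell\ge 2$.

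The heart of the argument is that reaching plus state at $R_{\ell-1}$ is expensive. First I would exclude the possibility that $R_{\ell-1}$ entered plus state through a $b$-step: if $R_i$ is in plus state and $u_{\ell-i+1}=b$, then $q_m\in R_{i-1}$ and $q_{2m-1}\notin R_{i-1}$, so $|R_i\cap Q_D|=|R_{i-1}\cap Q_D|-1$; with $i=\ell-1$ this yields $|R_{\ell-2}\cap Q_D|=|S|+1$, contradicting minimality. So $u_2=a$, and computing the preimage shows $q_{m+1}\in R_{\ell-2}$, $q_1\notin R_{\ell-2}$. For $1\le d\le m-2$ let $\Psi_d(R)$ denote ``$q_{m+d}\in R$ and $q_d\notin R$''. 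One checks that $\Psi_d(R_i)$ can only be produced by an $a$-step (a $b$-step would require $q_d\in R_{i-1}$ and $q_d\notin R_{i-1}$, since $q_db=q_d$), and that such an $a$-step forces $\Psi_{d+1}(R_{i-1})$ when $d\le m-3$ (and for $d=m-2$ it forces $q_{2m-1}\in R_{i-1}$, $q_{m-1}\notin R_{i-1}$, using $q_{2m-2}a=q_{2m-1}$). Crucially, covering makes $\Psi_d(S)$ false for every $1\le d\le m-2$, because $q_{m+d}\in Q_U$ and $q_{m+d}b=q_d$. Thus from $\Psi_1(R_{\ell-2})$ we descend $\Psi_1(R_{\ell-2})\Rightarrow\Psi_2(R_{\ell-3})\Rightarrow\cdots\Rightarrow\Psi_{m-2}(R_{\ell-m+1})$, each implication forcing the subscript index to be positive (otherwise we would be asserting some $\Psi_d(S)$); in particular $\ell-m+1\ge1$, that is $\ell\ge m$, the desired contradiction.

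I expect the main obstacle to be the bookkeeping behind the two facts that carry the argument: that the final preimage step forces exactly plus state together with $|R_{\ell-1}\cap Q_D|=|S|$, and that plus state cannot be reached cheaply via a $b$-step because that shrinks $|R\cap Q_D|$ below the maximal value that minimality has already pinned down. The mismatch between the cycle lengths $|Q_U|=m-1$ and $|Q_D|=m$ is exactly what prevents the $\Psi_d$-chain from closing up early and forces it to consume $m-1$ letters; one only has to be careful that the chain need be traced no further than $\Psi_{m-2}$, since the condition obtained from $\Psi_{m-2}$ involves the ``unpaired'' state $q_{m-1}$ and is not (and need not be) refuted for $S$.
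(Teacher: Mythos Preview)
Your argument is correct and follows essentially the same route as the paper's proof: identify that the last preimage step must be $b$ and leaves $q_{2m-1}$ uncovered in $R_{\ell-1}$, then trace this uncovered witness backward through the sequence, showing each earlier step must be $a$ until $m$ letters are exhausted. Your $\Psi_d$ invariant is exactly the paper's ``uncovered state $q_{m+d}$'' phrased more explicitly, and your cardinality argument excluding $u_2=b$ is equivalent to the paper's observation that $Tb^{-2}\cap Q_D=T\cap Q_D$; the two presentations differ only in bookkeeping.
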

\begin{proof}
Let $w = a_k \dots a_1$ be a shortest word of length $k$ such that $|Sw^{-1} \cap Q_D| > |S|$. Let $S_i = S a^{-1}_1 \dots a^{-1}_i$ for $i=0,\ldots,k$. So, $S_{0} = S$, and $S_k = Sw^{-1}$.

Since the length of $w$ is assumed to be the least possible, the action of $a_k$ cannot be a permutation, and therefore $a_k = b$. Moreover, since the transformation induced by $b$ maps only one state from $Q_D$ to $Q_U$, namely $q_m b = q_{2m-1}$, $S_{k-1}$ must contain $q_{2m-1}$, and must not contain $q_{m}$. Thus, $q_{2m-1}$ is not covered in $S_{k-1}$.

For any $T \subseteq Q$, $Tb^{-2} \cap Q_D \subseteq T \cap Q_D$. It follows that, if $a_{k-1} = b$, then $|S_{k-2} \cap Q_D| = |S_k \cap Q_D|$, which contradicts the assumption about the shortest length. Hence $a_{k-1} = a$.
Now, it follows that $S_{k-2}$ contains $q_{m+1}$, which is uncovered. Consequently, $a_{k-2} = a$, as any preimage under $b$ does not contain an uncovered state other than $q_{2m-1}$. Similarly, $S_{k-3}$ contains uncovered state $q_{m+2}$ (provided $m>3$). Repeating this argument $2m-1-(m+2) = m-3$ times we have that $a_i = a$ for $k-1 \ge i \ge k-m+1$, and $|w| = k \ge m$ as required.
\qed
\end{proof}

\begin{theorem}\label{thm:ext_easy}
The shortest words extending the subset $Q_U$ have length at least $2 + m\lceil(m-3)/2\rceil$.
%Let $\mathcal{A}_n$ be the automaton defined above, $n=2m-1$ and $T = Q_U$. If $u$ is a word of the shortest length such that $|T u^{-1}| > |T|$, then $|u| \ge 2 + m\lceil\frac{m-2}{2}\rceil$. In particular, for any $c>0$, if $n$ is large enough, then $T$ is not $cn$-extendable.
\end{theorem}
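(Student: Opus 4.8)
The plan is to start from the singleton $Q_U$ (which has no state of $Q_D$ in it at all, so in particular every state of $Q_U$ it contains is trivially covered) and to apply Lemma~\ref{lem:extend_S} repeatedly, tracking how the set grows as we extend it. Concretely, I would argue that any shortest extending word $w$ for $Q_U$ must be read in ``phases''. Within a phase we produce a word of length $\ge m$ that strictly increases $|S \cap Q_D|$ by exactly one (Lemma~\ref{lem:extend_S} gives the lower bound $m$ on the length; the ``exactly one'' should follow from minimality, since if some intermediate preimage already gained two elements of $Q_D$ we could stop earlier). The key bookkeeping fact to establish is that when we reach a set $S$ with, say, $|S \cap Q_D| = j$, the hypothesis of Lemma~\ref{lem:extend_S} is restored, i.e. all states of $S \cap Q_U$ are again covered in $S$ — and here I expect to need the structure of $b$ and $a$ on the ``$q_m b = q_{2m-1}$'' bridge to show that after the final $b$ of each phase the new set has its $Q_U$-part covered.

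Next I would count. A single application already costs $\ge m$ letters to gain one element of $Q_D$; but the point is that the first time we can gain a $Q_D$-element at all we start from $|S\cap Q_D| = 0$, and we keep applying the lemma until $|S \cap Q_D|$ is large enough that the total set size exceeds $|Q_U| = m-1$. Since $|S| = |S\cap Q_D| + |S\cap Q_U|$ and $|S \cap Q_U| \le m-1$, we must push $|S\cap Q_D|$ up by roughly $(m-1)/2$ before $|S|$ can exceed $m-1$ — more precisely, until $|S\cap Q_D| + |S\cap Q_U| > m-1$, and since the $Q_U$-part can only shrink or stay while the $Q_D$-part grows, one verifies the number of phases needed is $\lceil (m-3)/2 \rceil$ (the ``$-3$'' absorbing the small initial offset and the fact that we count extensions of $Q_U$ proper, not reaching $Q$). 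Each phase contributes at least $m$ letters, plus there is a constant additive overhead of $2$ to account for the very first compression step (the $ba$ or analogous prefix needed before the periodic behaviour starts, matching the ``$ba$'' in Proposition~\ref{pro:extend_synchronizing}). This yields the bound $2 + m\lceil (m-3)/2\rceil$.

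The main obstacle I anticipate is the ``restoration of the hypothesis'' step: Lemma~\ref{lem:extend_S} only bounds the length of a word that increases $|S w^{-1} \cap Q_D|$, and it only applies when the current set has all its $Q_U$-states covered. After one phase the set has changed, and I must verify (a) that its $Q_U$-part is again fully covered, so the lemma applies for the next phase, and (b) that the phases can be concatenated, i.e. that a shortest word for $Q_U$ genuinely decomposes into such phases rather than doing something globally cleverer. Point (b) requires the observation that between consecutive strict increases of $|S\cap Q_D|$ the word cannot be ``doing nothing useful'' without contradicting minimality, combined with the fact that increasing $|S\cap Q_U|$ alone never helps reach a set larger than $Q_U$ faster, because $|Q_U| = m-1$ is already the maximum size of the $Q_U$-part. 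A careful case analysis of the action of $b$ on uncovered states — essentially the same argument as in the proof of Lemma~\ref{lem:extend_S}, chaining backwards through the $a$-cycle on $Q_U$ — should close both points, at the cost of some routine but slightly delicate counting of exactly how many phases are needed, which is where the precise constants $2$ and $\lceil (m-3)/2\rceil$ come from.
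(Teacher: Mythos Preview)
Your very first step fails: the set $Q_U$ does \emph{not} satisfy the hypothesis of Lemma~\ref{lem:extend_S}. A state $q_i\in S\cap Q_U$ is covered in $S$ when $q_ib\in S$, and for every $q_i\in Q_U$ we have $q_ib\in Q_D$; since $Q_U\cap Q_D=\emptyset$, \emph{no} state of $Q_U$ is covered in $Q_U$. So you cannot launch the inductive ``phase'' argument from $Q_U$, and the whole chain of applications of the lemma never gets started.

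The paper avoids exactly this obstacle. It first observes that a shortest extending word $u$ for $Q_U$ must end in $b$ (because $Q_Ua^{-1}=Q_U$) and that $Q_Ub^{-1}=\{q_m\}$. Thus $u=bvb$ and one works with the middle word $v$ acting on the singleton $\{q_m\}\subseteq Q_D$, for which the covered hypothesis holds vacuously. Only from this point does the paper factor $v$ into phases $v_\ell\cdots v_1$, each the shortest word raising $d(T_{i-1})=|T_{i-1}\cap Q_D|$ by one, and now Lemma~\ref{lem:extend_S} applies at every step; the two outer $b$'s are precisely the ``$+2$''. Your counting of the number of phases is also not the right mechanism: the bound $\lceil(m-3)/2\rceil$ does not come from comparing $|S\cap Q_U|$ with $m-1$, but from the inequality $|Xb^{-1}|\le 2d(X)+1$ applied to the penultimate set $T$, which forces $d(T)\ge\lceil(m-1)/2\rceil$ and hence at least $\lceil(m-1)/2\rceil-1=\lceil(m-3)/2\rceil$ phases starting from $d(\{q_m\})=1$.
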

\begin{proof}
Let $u = a_k \ldots a_1$ be a shortest word such that $|Q_U u^{-1}| > |Q_U|$.
Obviously, $u$ ends with $b$, that is, $a_1 = b$, since $Q_U a^{-1} = Q_U$.
Also, as in the proof of Lemma~\ref{lem:extend_S}, $a_k=b$. Moreover, $Q_Ub^{-1} = \{q_m\}$.

For a subset $X \subseteq Q$ we define $d(X) = |X \cap Q_D|$.
Observe that $|Xb^{-1}| \le 2d(X)+1$.

Let $$T = \{q_m\} a^{-1}_2 \ldots a^{-1}_{k-1}.$$
Since $|Q_U u^{-1}| = |T a^{-1}_k| \ge m$, and $a_k = b$, we have that $m \le 2d(T)+1$, and so, $d(T) \ge \lceil(m-1)/2\rceil$.

Thus, we have $u = bvb$, where $v = a_{k-1} \dots a_2$, and $\{q_m\} v^{-1} = T$.
We define by induction the consecutive factors of $v$ as follows: 
$v=v_\ell v_{\ell-1} \ldots v_1$, $T_0=\{q_m\}$, and for each $i=1,2,\ldots,\ell$, $T_i = \{q_m\} v^{-1}_1 \ldots v^{-1}_i$, and $v_i$ is the shortest factor of $v$ such that $d(T_i) > d(T_{i-1})$.
Since they are the shortest words whose inverse action increases the number $d(T_{i-1})$, each $v_i$ begins with $b$, and the increase is by one: $d(T_i) = d(T_{i-1})+1$.
Moreover, for each $i$, all the states in $T_i \cap Q_U$ are covered in $T_i$.

By applying Lemma~\ref{lem:extend_S} for each $T_i$, we obtain $|v_i| \ge m$.
Since $d(q_m) = 1$ and $d(T) \ge \lceil(m-1)/2\rceil$, we have that $\ell \ge \lceil(m-3)/2\rceil$.
Thus $|u| \ge 2 + m\lceil(m-3)/2\rceil$ as required.
\qed
\end{proof}

The lower bound in the theorem above is rather rough. While it suffices to refute the conjecture in question, it is natural to ask for a better estimation. In this connection we have the following

\begin{theorem}\label{thm:ext_hard}
The shortest words extending the subset $Q_U$ have length $m^2+O(m)$.
Any non-empty proper subset $S \subset Q$ can be extended by a word of length at most $m^2+O(m)$.
\end{theorem}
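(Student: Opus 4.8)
The plan is to establish the theorem in two halves: a matching $\varOmega(m^2)$ lower bound for $Q_U$, and an $O(m^2)$ upper bound valid for every non-empty proper subset. For the lower bound, I would refine the argument of Theorem~\ref{thm:ext_easy}. There the crude step was estimating $d(T)\ge\lceil(m-1)/2\rceil$ via $|Xb^{-1}|\le 2d(X)+1$ and then only claiming $\ell\ge\lceil(m-3)/2\rceil$ many blocks $v_i$, each of length $\ge m$. To get $m^2+O(m)$ rather than $m^2/2+O(m)$ I would track more carefully how much a single application of $b^{-1}$ (preceded by a run of $a^{-1}$'s) can raise $d(\cdot)$: the point is that once all $Q_U$-states of a set are covered, Lemma~\ref{lem:extend_S} forces an entire block of $\ge m$ letters just to gain \emph{one} new $Q_D$-element. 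So the real content is to show the sets $T_i$ stay in the "all-covered" regime for $\varOmega(m)$ steps, i.e. that $d$ must actually climb all the way from $1$ to $\approx m$ (not just to $m/2$), because the rank $|Q_U u^{-1}|\ge m$ together with the covering structure of $b$ cannot be achieved with a half-full $Q_D$ part. This is a more careful accounting of the preimage sizes under $b$: a set with $d(X)=t$ all-covered has $|Xb^{-1}|\le t+1$ when additionally all its $Q_U$-members are covered, which is the governing inequality.

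For the upper bound I would exhibit, for an arbitrary non-empty $S\subsetneq Q$, an explicit extending word of length $O(m^2)$. The natural strategy is to first drive $S$ into a "nice" normal form and then use the synchronizing-type manoeuvres already analysed in Proposition~\ref{pro:extend_synchronizing}. Concretely: apply a bounded-length word (a short prefix of $a$'s and $b$'s, length $O(m)$) to bring $S$ to a set $S'$ whose $Q_U$-part is either empty or fully covered; then repeatedly apply the block $a^{m-1}b\,a^mb$ (length $2m+1$) which, as shown in the proof of Proposition~\ref{pro:extend_synchronizing}, shifts an initial segment $\{q_1,\dots,q_i\}\subseteq Q_D$ down by one index while being an identity on $Q_D\setminus\{q_m\}$. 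After $O(m)$ such blocks one reaches a set whose preimage under a single further letter is strictly larger, giving an extending word for $S$ itself by reading the whole composite word backwards; the total length is $O(m)\cdot O(m)=O(m^2)$. One must check the bookkeeping for all starting shapes of $S$ — in particular sets meeting both $Q_U$ and $Q_D$ in scattered ways — but in each case a preliminary word of length $O(m)$ collapses the relevant structure, so no case costs more than $O(m^2)$.

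The main obstacle is the lower bound, specifically proving that $d(T)$ must reach $\varTheta(m)$ rather than merely $\varTheta(m/2)$. The easy argument only squeezes out "$\ge\lceil(m-1)/2\rceil$" because it uses the generic inequality $|Xb^{-1}|\le 2d(X)+1$ once, at the very last step; to push the bound up I need to argue that large preimages of $Q_U$ can only be built up gradually, and that each increment of $d$ by one costs $\ge m$ letters throughout almost the entire process, not just during the first half. The delicate point is showing that the "all $Q_U$-states covered" hypothesis of Lemma~\ref{lem:extend_S} is maintained along the whole chain $T_0\supseteq$-preimages $\dots$ — i.e. that the greedy shortest blocks $v_i$ do not accidentally leave some $Q_U$-state uncovered, which would let the next block be short. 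Establishing this invariant (by analysing exactly which states the action of $b$ uncovers, namely only $q_{2m-1}$, combined with how runs of $a$ permute the upper and lower cycles) is where the real work lies; once it is in place, multiplying $\varTheta(m)$ increments by the per-increment cost $m$ from Lemma~\ref{lem:extend_S} yields the $m^2+O(m)$ bound, and the $O(m^2)$ upper bound from the explicit word construction then makes it tight.
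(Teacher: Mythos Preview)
Your lower-bound plan contains a concrete error. You propose to close the gap between Theorem~\ref{thm:ext_easy}'s $m^2/2$ and the target $m^2$ by arguing that $d(T)$ must climb all the way to $\approx m$ rather than $\approx m/2$, and you support this with the inequality ``$|Xb^{-1}|\le d(X)+1$ when all $Q_U$-states of $X$ are covered''. Both claims are false. For the inequality: each $q_i$ with $1\le i\le m-2$ has $q_ib^{-1}=\{q_i,q_{m+i}\}$, so a set with $t$ such states already has $|Xb^{-1}|\ge 2t$; covering of the $Q_U$-part is irrelevant because covered upper states (except $q_{2m-1}$) have empty $b$-preimage anyway. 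For the target value of $d$: the paper's own greedy construction extends $Q_U$ using a word of length $\approx m^2$ while $d$ only reaches $\lceil (m-1)/2\rceil$, exactly the value you already have from Theorem~\ref{thm:ext_easy}. So $d$ does \emph{not} need to climb to $m$, and your proposed invariant cannot hold.

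The missing factor of $2$ lives elsewhere: it is the \emph{per-increment cost} that is $\approx 2m$, not $\approx m$. Lemma~\ref{lem:extend_S} only extracts $m$ letters from each block $v_i$ because it stops tracking the uncovered $Q_U$-state once it cycles back to $q_{2m-1}$. To get the sharp bound one must additionally control the $Q_D$-part between consecutive $b$'s; the two $a$-cycles have coprime lengths $m$ and $m-1$, and realigning both (so that the next $b^{-1}$ gains a new $Q_D$-state while keeping everything covered) forces an $a$-run of length $\approx 2m$, as in the greedy block $ba^{2m-1}$. The paper's sketch does exactly this: it exhibits the greedy word built from blocks of length $2m$ reaching $d\approx m/2$, and states that the hard part of the proof is showing \emph{no other strategy beats greedy}, which requires comparing alternative sequences of blocks rather than strengthening a single per-block inequality.

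Your upper-bound sketch also mixes directions: applying forward words to $S$ (as in ``drive $S$ into a nice normal form'' using the block $a^{m-1}ba^mb$) produces images $Sw$ of size $\le |S|$, not preimages of size $>|S|$; ``reading the composite word backwards'' does not convert a compression of $S$ into an extension of $S$. The paper instead obtains the uniform upper bound for arbitrary $S$ as a corollary of the explicit greedy extending word for $Q_U$.
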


The proof of this result is much longer and involved than the proof of Theorem~\ref{thm:ext_easy}, and therefore it is not reproduced here. We describe only its main idea.

The beginning of the proof is the same as in the proof of the previous theorem. We make use of the fact that to extend the set $\{q_m\}$ to a set of cardinality $2d$ (for some $d$) one needs to obtain first a set $\{q_m\}w^{-1}=X=U\cup D$ with $|D| \geq d$, $D \subseteq Q_D$, $U \subseteq Q_U$. Next, we show that to achieve this aim the most effective (using a shortest word) is the greedy procedure described below.   

First we use the word
%$w=a^2ba^{2m-3}b$.
$w=b a^{2m-3} b a^2$.
It is easy to check that $\{q_m\}w^{-1} = \{q_1,q_m\}$. Then we use repeatedly $u = b a^{2m-1}$, obtaining the sets
$\{q_1,\ldots,q_i,q_m\}$ for $i=2,3,\ldots,d$.
Using the greedy strategy we can find a word extending $Q_U$ of length $m^2 - 3m/2 + 4$, in case of even $m$, and $m^2 - m +2$, in case of odd $m$.
This yields easily the second statement. The main difficulty is to demonstrate that the greedy strategy is the most efficient one. It requires to consider other possible strategies and to introduce special terminology to handle this.

%%%%%%%%%%%%%%%%%%%%%%%%%%%%%%%%%%%%%

We need to remark, that the fact that the greedy strategy is the most efficient way to extend the subset $\{q_m\}$ by a chosen number of elements does not mean that it is the most efficient way to extend the subset $Q_U$. The problem is that after obtaining the required number of states in $D$, which in case of even $m$ is exactly $m/2$,
we may still need a translation of $D$, so that the required number of states is in $D \cap \{q_1,\ldots,q_{m-2}\}$
(and applying subsequently $b$ doubles the number of states).
We have checked that sometimes different strategies to extend $Q_U$ lead to words of the same length as the one obtained by the greedy strategy. It may happen that there are cases when a different strategy is slightly more efficient (within $O(m)$ summand).

Note also, that if the \v{C}ern\'{y} conjecture is true, then every synchronizing automaton is $(n-1)^2$-extendable, thus asymptotically no better bound is possible.
However, we also believe that our series represents the worst possible case up to $O(n)$, that is, we conjecture that every strongly connected synchronizing automaton is $(n^2/4+O(n))$-extendable.

%%%%%%%%%%%%%%%%%%%%%%%%%%%%%%%%%%%%%%%%%%%%%%%%%%%%%%%%%%%%%%%%%%%%%%%%%%%%%%%%%%%%%%%%%%%%%%%%%%%%%%%%%%%%%%%%%%%%%%%%

It is possible to define a similar series for all even $n = 2m$ (Figure~\ref{fig:backward_hard_even}). The counterparts of Theorems~\ref{thm:ext_easy} and~\ref{thm:ext_hard} can be proved in a similar way.

\begin{figure}
\unitlength 26pt
\begin{center}\begin{picture}(8,5)(0,0.5)
\gasset{Nh=1,Nw=1,Nmr=1,ELdist=0.25,loopdiam=0.5}
\node(vm)(8,2){$q_m$}
\node(v1)(0,2){$q_1$}
\node[Nframe=n](vdots1)(2,2){$\dots$}
\node(vm-2)(4,2){$q_{m-2}$}
\node(vm-1)(6,2){$q_{m-1}$}
\drawedge[curvedepth=2,ELdist=-0.25](vm,v1){$a$}
\drawedge(v1,vdots1){$a$}
\drawedge(vdots1,vm-2){$a$}
\drawedge(vm-2,vm-1){$a$}
\drawedge(vm-1,vm){$a$}
\node(v2m)(8,4){$q_{2m}$}
\node(vm+1)(0,4){$q_{m+1}$}
\node[Nframe=n](vdots2)(2,4){$\dots$}
\node(v2m-2)(4,4){$q_{2m-2}$}
\node(v2m-1)(6,4){$q_{2m-1}$}
\drawedge[curvedepth=-1.2,ELdist=-0.25](v2m,vm+1){$a$}
\drawedge(vm+1,vdots2){$a$}
\drawedge(vdots2,v2m-2){$a$}
\drawedge(v2m-2,v2m-1){$a$}
\drawedge(v2m-1,v2m){$a$}
\drawloop[loopangle=-90,ELdist=0.1](v1){$b$}
\drawloop[loopangle=-90,ELdist=0.1](vm-2){$b$}
\drawloop[loopangle=-90,ELdist=0.1](vm-1){$b$}
\drawedge[curvedepth=0.2](v2m,vm){$b$}
\drawedge[curvedepth=0.2](vm,v2m){$b$}
\drawedge(vm+1,v1){$b$}
\drawedge(v2m-2,vm-2){$b$}
\drawedge[ELdist=-0.3](v2m-1,vm){$b$}
\end{picture}\end{center}
\caption{The automaton $\mathrsfs{A}_{2m}$.}\label{fig:backward_hard_even}
\end{figure}

%%&&&&&&&&&&&&&&&&&&&&
\section{Relaxing the Extension Property}

In view of the examples of subsets that are not extendable by words of length $n$ \cite{Berlinkov2011OnConjectureCarpiDAlessandro}, some efforts were made to relax the extension conjecture (see also the discussion in~\cite{KariVolkov2013Handbook}).
For example, one could suppose that extending an easily (linearly) extendable subset cannot lead to a difficultly (quadratically) extendable subset (this  is so  ]for $\mathrsfs{A}_{2m-1}$, and in general, would allow us to prove a quadratic bound for the length of the shortest reset words).
Berlinkov proposed\footnote{First Russian-Finnish Symposium on Discrete Mathematics \mbox{(RuFiDiM 2011)}} the following relaxed \emph{conservative extension conjecture}:
There exists a constant $c$ such that if a subset $S \subset Q$ can be extended to a subset $T \subset Q$ by a word $v$ of length at most $cn$ ($T = Sv^{-1} \supset S$), then also $T$ can be extended by a word $u$ of length at most $cn$ ($T u^{-1} \supset T$).
%There exists a constant $c$ such that if for some subset $S \subset Q$ there is a word $v$ of length at most $c n$ such that $T = S v^{-1}$ strictly contains $S$ and $S v^{-1} \neq Q$, then there is a word $u$ of length at most $c n$ such that $T u^{-1}$ strictly contains $T$.

This holds true for the series from~\cite{Berlinkov2011OnConjectureCarpiDAlessandro} and also for our series $\mathrsfs{A}_{2m-1}$.
However, we  construct a counterexample by modifying our series; see Figure~\ref{fig:conservative_backward_hard_even}.
Observe that $S = \{q_{m+1},\ldots,q_{2m-1}\}$ is extended by $a$ to $T = S \cup \{q_{2m}\}$. But then we must apply $b^{-1}$, resulting in $\{q_m\}$. Again, the arguments from  the proof of Theorem~\ref{thm:ext_easy}  hold and we need a word of length $\varOmega(n^2)$ to extend it to a subset larger than $S$.
So in fact, this is a counterexample for the stronger statement with $v$ of length 1 and without assuming that $T u^{-1}$ contains $T$.

\begin{figure}
\unitlength 26pt
\begin{center}\begin{picture}(9,6)(0,0.5)
\gasset{Nh=1,Nw=1,Nmr=1,ELdist=0.25,loopdiam=0.5}
\node(vm)(8,2){$q_m$}
\node(v1)(0,2){$q_1$}
\node[Nframe=n](vdots1)(2,2){$\dots$}
\node(vm-2)(4,2){$q_{m-2}$}
\node(vm-1)(6,2){$q_{m-1}$}
\drawedge[curvedepth=2,ELdist=-0.25](vm,v1){$a$}
\drawedge(v1,vdots1){$a$}
\drawedge(vdots1,vm-2){$a$}
\drawedge(vm-2,vm-1){$a$}
\drawedge(vm-1,vm){$a$}
\node(v2m-1)(6,4){$q_{2m-1}$}
\node(vm+1)(0,4){$q_{m+1}$}
\node[Nframe=n](vdots2)(2,4){$\dots$}
\node(v2m-2)(4,4){$q_{2m-2}$}
\node(v2m)(8,4){$q_{2m}$}
\drawedge[curvedepth=-1.1,ELdist=-0.25](v2m-1,vm+1){$a$}
\drawedge(vm+1,vdots2){$a$}
\drawedge(vdots2,v2m-2){$a$}
\drawedge(v2m-2,v2m-1){$a$}
\drawloop[loopangle=-90,ELdist=0.1](v1){$b$}
\drawloop[loopangle=-90,ELdist=0.1](vm-2){$b$}
\drawloop[loopangle=-90,ELdist=0.1](vm-1){$b$}
\drawedge[curvedepth=0.2](v2m,vm){$b$}
\drawedge[curvedepth=0.2](vm,v2m){$b$}
\drawedge[curvedepth=-1.6,ELdist=-0.4](v2m,vm+1){$a$}
\drawedge(v2m-1,vm-1){$b$}
\drawedge(vm+1,v1){$b$}
\drawedge(v2m-2,vm-2){$b$}
\end{picture}\end{center}
\caption{An automaton with a quadratically extendable preimage.}\label{fig:conservative_backward_hard_even}
\end{figure}
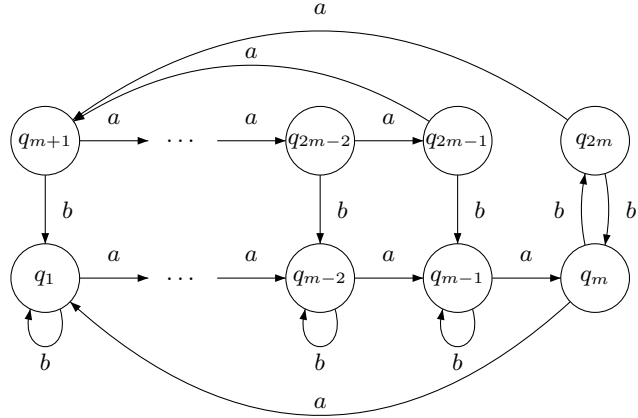

\subsection{Image Extending Conjecture}\label{subsec:image_extending}

One may observe that the subsets $Q_{2m-1}$ requiring long extending words are images of $Q_{2m-1}$ under the action of no word.
Hence, they cannot appear as intermediate subsets in the process of applying the consecutive letters of a synchronizing word. From the point of view of the extension method, if we consider prefixes $w_1 w_2 \dots w_i$ with $|Q w_1 \dots w_i| > |Q w_1 \dots w_{i+1}|$, we only need to find extending words for the images $Q w_1 \dots w_{i+1}$.
Thus, to apply iteratively extending words starting from a singleton, at each step, the resulted preimage must be also an image, or at least it needs to contain an image of the larger cardinality than the set in question. 

This restriction leads to the following weaker conjecture that we believe is true.

\begin{conjecture}[Image-Extension]\label{con:imgext}
There exists a constant $c$ such that, 
for every strongly connected synchronizing automaton $\mathrsfs{A} = (Q, \Sigma, \delta)$
and every non-empty $S \subset Q$ with $Qw=S$ for some $w \in \Sigma^*$, there exists a word $u$ of length at most $cn$ and a subset $T\subset Q$ such that $T \subseteq Su^{-1}$, 
$Qv=T$ for some $v \in \Sigma^*$, and $|T| > |S|$.
\end{conjecture}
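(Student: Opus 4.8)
The plan is to recast the statement linear-algebraically and then reduce it to a navigation problem inside a single rank layer. Identifying the indicator vector $\chi_S \in \mathbb{Q}^Q$ of a subset $S$ with the function $q \mapsto [q \in S]$, the preimage operation becomes linear: if $M_a$ is the $0/1$ matrix with $(M_a)_{p,q} = [pa = q]$, then $M_u \chi_S = \chi_{Su^{-1}}$ for every word $u$, and $M_u \mathbf{1} = \mathbf{1}$. In these terms the conjecture asks, for every proper image $S = Qw$ with $|S| = k < n$, for a word $u$ of length at most $cn$ and an image $T = Qv$ with $|T| > k$ and $\chi_T \le M_u \chi_S$ coordinatewise, i.e. $T \subseteq Su^{-1}$.

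First I would dispose of existence, isolating length as the only real difficulty. Writing the ranks of the prefixes of $w$ as $n = r_0 \ge r_1 \ge \dots \ge r_{|w|} = k$, this sequence is non-increasing and ends at $k$, so it never drops below $k$; let $xa$ be the shortest prefix of rank $k$, so that $x$ has rank $r = |Qx| > k$ and $w = xaz$ with $z$ acting as a bijection from $Qxa$ onto $S$. Taking $T = Qx$, an image of rank $r > k$, and $u = az$ gives $Tu = Qw = S$, hence $T \subseteq Su^{-1}$. Thus a witness always exists; the content of the conjecture is \emph{entirely} the bound $|u| \le cn$, and since the extension method applies it $n-1$ times, only a linear bound yields the desired quadratic reset length.

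The crux is therefore to replace this possibly long $u = az$ by a short word. Its first letter $a$ performs a single merge carrying the larger image $T$ onto a rank-$k$ image, while the remaining factor $z$ stays inside the rank-$k$ image layer, where rank-preserving letters act as bijections between images. Hence the problem becomes: among the rank-$k$ images lying below some strictly larger image (exactly the images one may reach after the single merge), find one from which $S$ is reachable by a short rank-preserving word. I would attack this by a dimension argument in the space $V_k = \operatorname{span}_{\mathbb{Q}}\{\chi_{S'} : S' \text{ a rank-}k \text{ image}\}$ combined with the strongly connected, synchronizing structure: an averaging or stationary-measure vector orthogonal to $\mathbf{1}$ should force, after at most $\dim V_k \le n$ letters, a rank-preserving word connecting the layer to $S$, provided one can certify that each letter makes genuine linear progress.

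The main obstacle is precisely this length bound, and it is genuinely hard for two linked reasons. First, the rank-$k$ image layer can contain exponentially many images, and navigating between them by rank-preserving words is essentially a \v{C}ern\'y-type problem at rank $k$, so a naive diameter bound is useless and one must extract linear progress from the algebra rather than from counting. Second, the image hypothesis must be used in an essential way: the subsets of $\mathrsfs{A}_{2m-1}$ requiring words of length $\varTheta(n^2)$ show that dropping the requirement that $T$ be an image makes the statement false, yet there is no linear characterization of images to feed into the linear algebra. Reconciling the image constraint with a linear length bound is the heart of the difficulty; as concrete first steps I would verify the conjecture for Eulerian and one-cluster automata, where rank-preserving navigation is already understood, and would try to prove the weaker quadratic-length variant as a stepping stone.
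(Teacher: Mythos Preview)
The statement you are attempting to prove is a \emph{conjecture}: the paper states it as an open problem and offers no proof. What the paper does prove about it is the opposite direction --- a lower bound showing that any constant $c$ satisfying the conjecture must be at least $3/2$, via the explicit series $\mathrsfs{B}_{2m}$. So there is no ``paper's own proof'' to compare against; the relevant comparison is whether your proposal actually establishes the conjecture, and it does not.

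Your existence argument is correct and standard: taking the last prefix $x$ of $w$ with $|Qx|>|S|$ and setting $T=Qx$, $u=az$ gives a witness of unbounded length. You then correctly isolate the entire content of the conjecture as the linear bound on $|u|$, and from that point on the proposal is a research outline rather than a proof. The suggested dimension argument in $V_k=\operatorname{span}\{\chi_{S'}:S'\text{ a rank-}k\text{ image}\}$ is not carried out: you would need that each rank-preserving letter yields strict linear progress toward $\chi_S$, but no mechanism is given for this, and you yourself note that the image hypothesis has no linear characterization to feed into such an argument. The two obstacles you list --- exponential size of the rank-$k$ layer and the essential use of the image condition (witnessed by $\mathrsfs{A}_{2m-1}$) --- are exactly why the statement remains open, and your proposal does not overcome either of them. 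In short, the genuine gap is that the only nontrivial claim, the bound $|u|\le cn$, is left as a hope attached to an unspecified averaging/dimension argument.
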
  

This conjecture implies a quadratic bound on the length of the shortest synchronizing word, yet we show that it cannot be a tool to prove the \v{C}ern\'y conjecture because of the following

\begin{figure}
{\unitlength 26pt
\begin{center}\begin{picture}(10,7)(0,0.5)
\gasset{Nh=1,Nw=1,Nmr=1,ELdist=0.25,loopdiam=0.5}
\node(v1)(0,2){$q_1$}
\node[Nframe=n](vdots1)(2,2){$\dots$}
\node(vm-3)(4,2){$q_{m-3}$}
\node(vm-2)(6,2){$q_{m-2}$}
\node(vm-1)(8,2){$q_{m-1}$}
\node(vm)(10,2){$q_m$}
\drawedge[curvedepth=1.2,ELdist=-0.25](vm,v1){$a$}
\drawedge(v1,vdots1){$a$}
\drawedge(vdots1,vm-3){$a$}
\drawedge(vm-3,vm-2){$a$}
\drawedge(vm-2,vm-1){$a$}
\drawedge(vm-1,vm){$a$}
\drawloop[loopangle=90,ELdist=0.1](v1){$b$}
\drawloop[loopangle=90,ELdist=0.1](vm-3){$b$}
\drawloop[loopangle=90,ELdist=0.1](vm-2){$b$}
\node(vm+1)(2,6){$q_{m+1}$}
\node[Nframe=n](vdots2)(4,6){$\dots$}
\node(v2m-3)(6,6){$q_{2m-3}$}
\node(v2m-2)(8,6){$q_{2m-2}$}
\node(v2m-1)(10,6){$q_{2m-1}$}
\node(v2m)(10,4){$q_{2m}$}
\drawloop[loopangle=-90,ELdist=0.1](vm+1){$b$}
\drawloop[loopangle=-90,ELdist=0.1](v2m-3){$b$}
\drawedge[curvedepth=-1.2,ELdist=-0.25](v2m-1,vm+1){$a$}
\drawedge(vm+1,vdots2){$a$}
\drawedge(vdots2,v2m-3){$a$}
\drawedge(v2m-3,v2m-2){$a$}
\drawedge(v2m-2,v2m-1){$a$}
\drawloop[loopangle=0](v2m){$a$}
\drawedge[curvedepth=0.2](v2m-2,vm-1){$b$}
\drawedge[curvedepth=0.2](vm-1,v2m-2){$b$}
\drawedge(vm,v2m){$b$}
\drawedge[curvedepth=0.2](v2m,v2m-1){$b$}
\drawedge[curvedepth=0.2](v2m-1,v2m){$b$}
\end{picture}\end{center}
\caption{The automaton $\mathrsfs{B}_{2m}$.}\label{fig:weakext_3/2}
}\end{figure}

\begin{proposition}
The constant $c$ in Conjecture~\ref{con:imgext} must be at least $3/2$.
\end{proposition}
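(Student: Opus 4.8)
The plan is to exhibit, in the automaton $\mathrsfs{B}_{2m}$ of Figure~\ref{fig:weakext_3/2}, an image $S$ of $Q_{2m}$ (under some word) whose cardinality is roughly $m$, and to show that every word $u$ for which $Su^{-1}$ contains a strictly larger image of $Q_{2m}$ must have length at least $\tfrac32 n + O(1) = 3m + O(1)$. A natural candidate for $S$ is the image of $Q_{2m}$ after one application of $a$ followed by some word that collapses the "doubled" structure — essentially the analogue of $Q_U$ in the earlier series, sitting on the upper cycle $\{q_{m+1},\ldots,q_{2m-1}\}$ together with the sink-like state $q_{2m}$ (which has an $a$-loop). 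First I would identify precisely which subsets of $Q_{2m}$ are images: because $q_{2m}$ carries a self-loop under $a$ and is reached only via $q_m b$, the reachable images are constrained, and I would argue that the smallest image strictly larger than $S$ that is reachable via a short preimage forces the word $u$ to "route through" $q_m$, i.e. $Su^{-1}$ can only grow by first becoming (a superset of) $\{q_m\}$ under $b^{-1}$, exactly as in the proof of Theorem~\ref{thm:ext_easy}.

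The core quantitative step is then a lower-bound argument of the same flavour as Lemma~\ref{lem:extend_S} and Theorem~\ref{thm:ext_easy}, but now calibrated to the geometry of $\mathrsfs{B}_{2m}$, which has two cycles of lengths $m$ (the lower one, $q_1 \to \cdots \to q_m \to q_1$) and $m+1$ (the upper one, including $q_{2m}$). I would set $u = a_k\cdots a_1$ a shortest word with $|Su^{-1}| > |S|$ and $Su^{-1}$ an image, and track the sets $S_i = S a_1^{-1}\cdots a_i^{-1}$, using the coverage/uncovered-state bookkeeping: $a_k = b$ by minimality, which forces $S_{k-1}$ to contain an uncovered state on the upper cycle, which in turn forces a long run of $a$'s of length comparable to the length of that cycle, namely $\approx m+1$, to "unwind" the preimage around the larger cycle before $b$ can be applied again. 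Doing the bookkeeping carefully should yield that extending $S$ to the next reachable image costs one block of $b a^{\approx m}$ plus a second such block (to double from $|D|\approx 1$ up to the needed $|D|$), giving total length $\ge 3m + O(1)$; then, since $n = 2m$, this is $\ge \tfrac32 n + O(1)$, and for the proposition we only need the weaker assertion that the constant cannot be smaller than $3/2$, so a clean concrete word realizing the lower bound on a specific $S$ suffices.

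The hard part, as in Theorem~\ref{thm:ext_hard}, is not the lower bound on a single well-chosen $S$ but making sure no cheaper route exists: one must rule out the possibility that $Su^{-1}$ picks up a larger image by some cleverer path that avoids the full traversal of the long cycle — for instance by exploiting the $a$-loop at $q_{2m}$ or the $b$-edges $q_{2m-1}\leftrightarrow q_{2m}$ to gain an element "for free." I would handle this by the same uncovered-state invariant: showing that any $b^{-1}$ step applied to a set all of whose upper-cycle states are covered can introduce at most the single new uncovered state $q_m$ (the unique $Q_D$-state mapped by $b$ into the upper part), so progress on the "lower-part count" $d(\cdot)$ is necessarily made in blocks separated by full $a$-runs around the $(m+1)$-cycle. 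Since the proposition asserts only a lower bound of $3/2$ on the constant $c$, it is enough to carry this analysis through for the one subset $S$ described above and conclude $|u| \ge \tfrac32 n + O(1)$; I would not attempt the matching upper bound here, merely remark (as the authors do for the earlier series) that a greedy strategy attains length $\tfrac32 n + O(n)$ \emph{wait}, rather $\tfrac32 n + O(1)$, confirming $3/2$ is the right constant for $\mathrsfs{B}_{2m}$.
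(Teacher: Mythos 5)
There is a genuine gap, and it starts with the choice of the witness set $S$. In $\mathrsfs{B}_{2m}$ the letter $b$ acts nothing like the letter $b$ of $\mathrsfs{A}_{2m-1}$: it is the identity on $q_1,\dots,q_{m-2}$ and on $q_{m+1},\dots,q_{2m-3}$, it swaps $q_{m-1}\leftrightarrow q_{2m-2}$ and $q_{2m-1}\leftrightarrow q_{2m}$, and it merges only $q_m$ into $q_{2m}$. Consequently the ``covered/uncovered'' invariant of Lemma~\ref{lem:extend_S} has no analogue here, and --- more importantly --- your proposed $S$ (the upper part of the automaton) is \emph{easily} extendable: for $S=\{q_{m+1},\dots,q_{2m}\}$ one has $Sb^{-1}=\{q_{m-1},q_m,q_{m+1},\dots,q_{2m-3},q_{2m-1},q_{2m}\}$, of cardinality $m+1>m$, so a single letter extends it; for $S=\{q_{m+1},\dots,q_{2m-1}\}$ the word $b^2$ already extends it. So no large upper-cycle subset can witness a $\tfrac32 n$ lower bound. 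You also misread the geometry: the upper $a$-cycle $(q_{m+1},\dots,q_{2m-1})$ has length $m-1$, not $m+1$; $q_{2m}$ is an $a$-fixed point off both cycles, and the coprimality of $m$ and $m-1$ is what the paper actually exploits. Finally, you never establish that your $S$ is an image of $Q$, which is part of the hypothesis of Conjecture~\ref{con:imgext}.

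The paper's proof goes the opposite way: it takes the \emph{small} subset $S=\{q_{m-3},q_{m-2}\}$ deep in the lower cycle (and first checks, via the pair-compression argument, that $S$ is an image of $Q$). The difficulty for this $S$ is not a covering obstruction but sheer travel time: one must apply $a^{-(m-2)}$ to bring the pair to $\{q_{m-1},q_m\}$ before $b^{-1}$ does anything useful, then a $b^{-1}$ step sends part of the pair to the upper cycle, then $a^{-(m-2)}$ and $a^{-(m-1)}$ realign the two points on the two coprime cycles so that a final $b^{-2}$ can gain an element, for a total of $(m-2)+1+1+(m-2)+(m-1)+2=3m-1=\tfrac{3n}{2}-1$, with a short case analysis ruling out the alternative branch. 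If you want to salvage your plan, you must replace your $S$ by such a lower-cycle pair and replace the covered-state bookkeeping by a direct tracking of the (very few) preimage options at each step; as written, the central quantitative claim of your argument is applied to a set for which it is false.
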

\begin{proof}
For $n = 2m \ge 8$, consider the automaton $\mathrsfs{B}_{2m}$ from Figure~\ref{fig:weakext_3/2}.
Clearly $\mathrsfs{B}_{2m}$ is strongly connected. To show that it is synchronizing, it is sufficient to prove that every pair of states can be compressed.
Let $\{p,q\}$ be a pair of distinct states.
Suppose that $p$ and $q$ lie in different cycles of $a$; without loss of generality, $p \in \{q_{m+1},\ldots,q_{2m-1}\}$ and $q \in \{q_1,\ldots,q_m\}$.
Since the lengths of the cycles of $a$ are relatively prime, by a word of the form $a^i$ we can map $p,q$ to any other pair of states with $pa^i \in \{q_{m+1},\ldots,q_{2m-1}\}$ and $qa^i \in \{q_1,\ldots,q_m\}$. In particular, we can map them to $q_{2m-1}$ and $q_{m}$. Then $b$ compresses this pair.
If $p$ and $q$ lie in the upper cycle $(q_{m+1},\ldots,q_{2m-1})$ of $a$, then by a word of the form $a^i$ we can map one of the states to $q_{2m-2}$ so that the second state is not mapped to $q_{2m-1}$. Then using $b$ results in a pair of states in different cycles of $a$, and we repeat the argument above.
Similarly, if $p$ and $q$ lie in the lower cycle $(q_1,\ldots,q_m)$, then we can map one of them to $q_{m-1}$ and the second one to a state other than $q_m$, and again apply $b$.
Finally, if $p = q_{2m}$ then either using $b$ or $ab$ (depending on $q$) results in a pair without $q_{2m}$.

Consider now $S=\{q_{m-3},q_{m-2}\}$.
Since $b$ can reduce the size of the subset only by 1, and $a$ is a permutation, there exists a word of rank 2.
Then we can map the resulted pair $\{p,q\}$ onto $S$ by the arguments above. Thus $S$ is an image of $Q$ under the action of some word.

We show that the shortest words extending $S$ have length $\frac{3n}{2}-1$.
Let $u$ be a shortest extending word of $S$. To simplify notation in the remaining part of the proof, we consider the reversed word $w = u^R$, and the inverse actions $a^{-1}$, $b^{-1}$ of both letters.

Obviously $w$ starts with $a^{m-2}$, since applying $b$ earlier does not result in a new set, or results in a proper subset in the case of $\{q_m,q_1\}$.
Then we have $\{q_{m-1},q_m\}$, and we consider the following two cases:
\begin{enumerate}
\item The next letter is $b$.
Here we have $\{q_{2m-2}\}$, and the next part of $w$ is $a^{m-2}$, which results in $\{q_{2m-1}\}$.
Then there must be $b^2$, and we obtain $\{q_{2m-1},q_m\}$.
The length of $w$ considered so far is $(m-2)+1+(m-2)+2 = 2m-1$.
\item The next letter is $a$.
Here we have $\{q_{m-2},q_{m-1}\}$, and the next letter is $b$.
The next part of $w$ is $a^{m-2}$, which results in $\{q_{2m-1},q_m\}$.
The length of $w$ considered so far is $(m-2)+1+1+(m-2) = 2m-2$.
\end{enumerate}
Since $w$ is a shortest word, the second case must take place.
The next part of $w$ must be $a^{m-1}$, which maps $\{q_{2m-1},q_m\}$ to $\{q_{2m-1},q_1\}$.
Then $b^2$ is the shortest word extending $\{q_{2m-1},q_1\}$, which results in $\{q_{2m-1},q_m,q_1\}$.
Therefore, the length of $w$ is $2m-2 + (m-1) + 2 = 3m-1 = 3n/2-1$.
\qed
\end{proof}

\subsection{Separating States}\label{subsec:separating_states}

The series of automata $\mathrsfs{B}_{2m}$ in Figure~\ref{fig:weakext_3/2} has another interesting property connected with a recent attempt to improve the
Pin-Frankl bound \cite{Tr2011ModifyingUpperBound}. Lemma~3 in the above mentioned paper, claiming that for every state $q$ there exists a word $w$ of length at most $n$ such that $q \not\in Qw$, turned out to be false. 
In a recent short note \cite{GJT2014ANoteOnARecentAttempt}
the authors present a certain automaton on 4 states and show that it is a counterexample to the statement in~\cite[Lemma 3]{Tr2011ModifyingUpperBound}. In this connection it is worth observing that our series $\mathrsfs{B}_{2m}$ provides counterexamples of arbitrary order.

It is not difficult to prove the following

\begin{proposition}\label{pro:separating_states}
The shortest words $w$ such that $q_{2m} \not\in Qw$ are of length $2m+2 = n+2$.
\end{proposition}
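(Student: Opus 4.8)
The plan is to analyze the automaton $\mathrsfs{B}_{2m}$ in Figure~\ref{fig:weakext_3/2} and determine exactly when the state $q_{2m}$ can be excluded from an image $Qw$. First I would identify which states map to $q_{2m}$ under each letter: from the transition diagram, $q_{2m}$ has the $a$-loop at itself, and under $b$ it is reached only from $q_m$ (the edge $q_m \xrightarrow{b} q_{2m}$). Hence $Q w$ avoids $q_{2m}$ iff $w$ ends with a letter whose preimage of $q_{2m}$ is disjoint from the current image; since $a$ fixes $q_{2m}$, a prefix achieving $q_{2m} \notin Qv$ can never be extended by $a$ without destroying the property, so the last letter of a shortest such $w$ must be $b$, and the image just before that $b$ must not contain $q_m$. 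So the task reduces to: find the shortest word $v$ with $q_m \notin Qv$, and then $w = vb$ with $|w| = |v| + 1$.

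Next I would establish the lower bound. To remove $q_m$ from the image, one similarly needs the last letter of $v$ to be a letter under which $q_m$'s preimage misses the current image. The preimage of $q_m$ under $a$ is $\{q_{m-1}\}$, and under $b$ it is $\{q_{2m}, q_{2m-1}\}$ (from the edges $q_{2m} \xrightarrow{b} q_{2m-1}$... — here I must recheck: the edges into the $q_{2m}$ gadget are $q_m \xrightarrow{b} q_{2m}$, $q_{2m-1}\xrightleftharpoons{b} q_{2m}$, so $q_m b^{-1} = \{q_{2m}\}$ only). So $q_m$ is reached under $b$ only from $q_{2m}$; thus to get $q_m \notin Qv$ we either end $v$ with $a$ applied to an image missing $q_{m-1}$, or end $v$ with $b$ applied to an image missing $q_{2m}$ — but missing $q_{2m}$ is precisely the harder original problem, so the efficient route ends with $a$. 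Iterating: to reach an image missing $q_{m-1}$ we trace back along the lower cycle $(q_1,\ldots,q_m)$ of length $m$, since $a$ permutes that $m$-cycle and the only way to "drop" a state from the image along this cycle is to exploit the $b$-transitions that glue the two cycles; I expect the count to come out to roughly one full traversal of each cycle, giving $|v| = 2m+1$ and hence $|w| = 2m+2$. The complementary upper bound is obtained by exhibiting an explicit word of length $2m+2$: start from $Q$, reduce to a set missing $q_{m-1}$ by a word of the form $a^{i} b a^{j}$ that pushes everything through the $b$-bridge so that nothing lands on $q_{m-1}$, then apply $a$ to miss $q_m$, then apply $b$ to miss $q_{2m}$. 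A direct computation of $Q$ under such a word confirms the bound.

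The main obstacle I anticipate is the lower bound argument — specifically, showing that one genuinely needs to "travel around" both cycles and that no shortcut through the $q_{2m}$-loop-gadget does better. The subtlety is that $q_{2m}$ has an $a$-loop, so $a$ can never by itself remove $q_{2m}$ from an image, and the $b$-edge into $q_{2m}$ comes from $q_m$, which couples the problem of excluding $q_{2m}$ to the problem of excluding $q_m$; one must argue carefully that chasing this coupling does not cycle back cheaply. I would handle this by a potential/invariant argument tracking, for each prefix, how many states of each $a$-cycle lie in the current image, showing this count can decrease only at a $b$-step and only by a bounded amount, forcing at least $m-1$ steps within the upper cycle and a comparable number within the lower cycle before the relevant state can be dropped. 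The bookkeeping is essentially the same flavor as in the proof of the preceding Proposition, so I would phrase it by reversing the word and reasoning with the inverse actions $a^{-1}, b^{-1}$ as done there, which keeps the case analysis short; that is why the excerpt remarks this proposition "is not difficult to prove."
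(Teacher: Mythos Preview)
Your reduction is built on a misreading of the transition function of $\mathrsfs{B}_{2m}$, and this error invalidates the whole argument. In the figure, $q_{2m}$ and $q_{2m-1}$ are swapped by $b$; hence $q_{2m}b^{-1}=\{q_m,q_{2m-1}\}$, not $\{q_m\}$. Consequently, to obtain $q_{2m}\notin Qvb$ one needs the image $Qv$ to miss \emph{both} $q_m$ and $q_{2m-1}$ simultaneously, not just $q_m$. Likewise your computation $q_m b^{-1}=\{q_{2m}\}$ is wrong: since $q_{2m}b=q_{2m-1}$, nothing maps to $q_m$ under $b$, so $q_m b^{-1}=\emptyset$. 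With the correct preimages your proposed subproblem ``find the shortest $v$ with $q_m\notin Qv$'' is trivial ($v=b$ already works), and your scheme would produce $|w|=2$, which is absurd.

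The paper's argument is exactly of the backward-tracking type you sketch at the end, but it tracks a \emph{pair} of forbidden states at every step. One shows that the last letter of a shortest $w$ is $b$, so $S_{k-1}=Qa_1\cdots a_{k-1}$ omits $q_m$ and $q_{2m-1}$. The penultimate letter cannot be $b$ (that would force $q_{2m}\notin S_{k-2}$, contradicting minimality), so it is $a$, and $S_{k-2}$ omits $q_{m-1}$ and $q_{2m-2}$. Iterating, one is forced into a block $a^{m-2}$ (rotating the forbidden pair around the two $a$-cycles until it reaches $q_1$ and $q_{m+1}$), then a single $b$, then a block $a^{m-1}$, and finally the initial $b$ (the only way to drop $q_m$ from $Q$ in the first place). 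An explicit witness is $w=ba^{m-1}ba^{m}b$ of length $2m+2$. Your potential/invariant idea can be made to work, but only once you carry two missing states rather than one; as stated, the proposal has a genuine gap.
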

\begin{comment}
\begin{proof}
First observe that $w = ba^{m-1}ba^{m}b$ does the job.

Now assume that $w$ is a shortest word with $q_{2m} \not\in Qw$.
Let $w = a_1 \ldots a_k$, where $|a_i|=1$. We will consider the subsets $S_k = Qa_1 \ldots a_k$, $S_{k-1} = Qa_1 \ldots a_{k-1}$, $\ldots$, $S_1 = Qa_1$.
Since $w$ is a shortest such word, every $S_i$ for $i < k$ contains $q_{2m}$.
Since $S_k$ does not contain $q_{2m}$, $a_k = b$ and $S_2$ must not contain $q_{m}$ nor $q_{2m-1}$.
Then $a_{k-1}$ must be $a$, and $q_{m-1},q_{2m-1} \not\in S_{k-1}$.
Also, $a_{k-m+1} \ldots a_{k-2} = a^{m-2}$, as otherwise the two states not in the current subset would not change, and we could remove a $b$ letter from $w$ obtaining a shorter word. We have $q_{1},q_{2m-1} \not\in S_{k-m}$.
Again, $a_{k-m} = a$, because $q_{2m} \in S_{k-m-1}$, and so $q_{2m-2} \not\in S_{k-m-1}$.

Since $a$ is a permutation, the first letter $a_1 = b$.
Since $q_m$ is the only state outside the image $Qb$, the shortest word leading from $Q$ to a subset without $q_{2m-2}$ is $ba^{m-1}b$.
Thus $k-m-1 \ge m+1$, so $k \ge 2m+2$, and $|w|$ has length at least $n+2$.
\qed
\end{proof}
\end{comment}

The length $w$ in our proposition exceeds $n$ only by 2. 
It is interesting whether there are counterexamples requiring still longer words than those of length $n+2$. The open question in \cite{GJT2014ANoteOnARecentAttempt} asks whether there exists a constant $c$ such that, for any finite automaton and its state $q$, there exists a word $w$ of length not greater than $cn$ such that $q \notin Qw$.

%%%%%%%%%%%%%%%%%%%%%%%%%%%%%%%%%%%%%%%%%%%%%%%%%%%%%%%%%%%%%%%%%%%%%%%%%%%%%%%%%%%%%%%%%%%%%%%%%%%%%%%%%%%%%%%%%%%%%%%%

\section{Slowly Synchronizing Automata on a Ternary Alphabet}\label{sec:slowly_synchronizing_3}

In this section we present two series of ternary irreducibly slowly synchronizing automata. They turn out to be related to the digraph $W_n$ defined in  \cite{AGV2013}. They are of interest since so far only such series over a binary alphabet have been known. 

Let $Q_n = \{q_1,\ldots,q_n\}$, $n\ge 3$, and $\Sigma = \{a,b,c\}$.
Let $\mathrsfs{M}_n = \langle Q_n, \Sigma, \delta_n \rangle$ and $\mathrsfs{M}'_n = \langle Q_n, \Sigma, \delta'_n \rangle$ be the automata shown in~Figure~\ref{fig:slowly_synchronizing_mn}. The transition functions $\delta_n$ and $\delta'_n$ are defined as follows:
$$\delta(q_i,a) = \begin{cases}
q_{i+1}, & \text{ if $1 \le i \le n-1$,} \\
q_{2}, & \text{ if $i = n$}, \\
\end{cases}\quad\quad
\delta(q_i,c) = \begin{cases}
q_{n}, & \text{ if $i = 1$,} \\
q_{i}, & \text{ if $2 \le i \le n-1$,} \\
q_{1}, & \text{ if $i = n$}, \\
\end{cases}$$
$$\delta(q_i,b) = \begin{cases}
q_{2}, & \text{ if $i = 1$,} \\
q_{i}, & \text{ if $2 \le i \le n-1$,} \\
\end{cases}\quad\quad\;\;
\delta'(q_i,c) = \begin{cases}
q_{i}, & \text{ if $1 \le i \le n-1$,} \\
q_{1}, & \text{ if $i = n$,} \\
\end{cases}$$
\noindent and $\delta'(q_i,a) = \delta(q_i,a), \delta'(q_i,b) = \delta(q_i,b)$, otherwise.

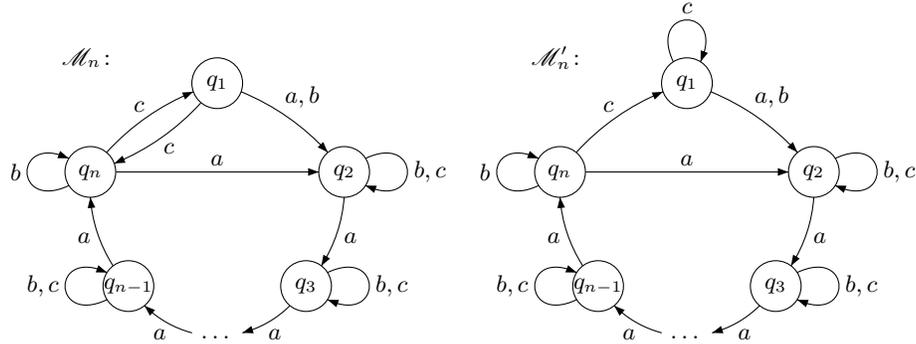
\begin{figure}
\unitlength 4.8pt
\begin{center}\begin{picture}(19,23)(10,5.5)
\gasset{Nh=4,Nw=4,Nmr=2,ELdist=0.5,loopdiam=3}
\node[Nframe=n](name_m1n)(0,26){$\mathrsfs{M}_n\colon$}
\node(vn-1)(3,8){$q_{n-1}$}
\node(vn)(0,17){$q_n$}
\node(v1)(10,24){$q_1$}
\node(v2)(20,17){$q_2$}
\node(v3)(17,8){$q_3$}
\node[Nframe=n](vdots)(10,4){$\dots$}
\drawedge[curvedepth=1](vdots,vn-1){$a$}
\drawedge[curvedepth=1](vn-1,vn){$a$}
\drawedge(vn,v2){$a$}
\drawedge[curvedepth=1](v1,v2){$a,b$}
\drawedge[curvedepth=1](v2,v3){$a$}
\drawedge[curvedepth=1](v3,vdots){$a$}
\drawedge[curvedepth=1](vn,v1){$c$}
\drawedge[curvedepth=1](v1,vn){$c$}
\drawloop[loopangle=0](v2){$b,c$}
\drawloop[loopangle=0](v3){$b,c$}
\drawloop[loopangle=180](vn-1){$b,c$}
\drawloop[loopangle=180](vn){$b$}
\end{picture}\begin{picture}(19,23)(-8,5.5)
\gasset{Nh=4,Nw=4,Nmr=2,ELdist=0.5,loopdiam=3}
\node[Nframe=n](name_m2n)(0,26){$\mathrsfs{M}'_n\colon$}
\node(vn-1)(3,8){$q_{n-1}$}
\node(vn)(0,17){$q_n$}
\node(v1)(10,24){$q_1$}
\node(v2)(20,17){$q_2$}
\node(v3)(17,8){$q_3$}
\node[Nframe=n](vdots)(10,4){$\dots$}
\drawedge[curvedepth=1](vdots,vn-1){$a$}
\drawedge[curvedepth=1](vn-1,vn){$a$}
\drawedge(vn,v2){$a$}
\drawedge[curvedepth=1](v1,v2){$a,b$}
\drawedge[curvedepth=1](v2,v3){$a$}
\drawedge[curvedepth=1](v3,vdots){$a$}
\drawedge[curvedepth=1](vn,v1){$c$}
\drawloop[loopangle=90](v1){$c$}
\drawloop[loopangle=0](v2){$b,c$}
\drawloop[loopangle=0](v3){$b,c$}
\drawloop[loopangle=180](vn-1){$b,c$}
\drawloop[loopangle=180](vn){$b$}
\end{picture}\end{center}
\caption{$\mathrsfs{M}_n$ with reset length $n^2-3n+3$, and $\mathrsfs{M}'_n$ with reset length $n^2-3n+2$.}\label{fig:slowly_synchronizing_mn}
\end{figure}

In determining the reset length of the series we applied a technique, which is alternative to that of \cite{AGV2010,AGV2013}. Our method is based on analyzing the behavior of the inverse BFS algorithm finding the length of the shortest reset words \cite{KKS2015ComputingTheShortestResetWords} and is suitable, in general, for a more mechanical way to establish the reset lengths of concrete automata. Also, it may lead to simpler proofs in case of larger alphabets (when the number of induced automata by combinations of letters is large).
It relies on the fact, that when we are applying the inverse actions of letters starting from all singletons, then the number of new resulted sets is always bounded by a very small constant. In fact, a particular form of this method was first used for the \v{C}ern\'y series \cite{Cerny1964}.
Interestingly, this method also works fine in a very similar way for all slowly synchronizing series defined in~\cite{AGV2010,AGV2013,AVZ2006}, as they all have this property.

For a synchronizing automaton $\mathrsfs{A} = \langle Q, \Sigma, \delta\rangle$ with $n > 1$ states, we define the sequence of families $(L_i)$ of the subsets of $Q$. Let $L_0$ be the family of all the singletons, which are a common end of more than one edge with the same label. We will define $L_i$ inductively for $i \ge 1$. Let $L'_i = \{S a^{-1}\colon S \in L_{i-1}, a \in \Sigma\}$. A set $S \in L'_i$ is called \emph{visited}, if $|S|=1$ or there is $T \in L_j$, $T \supseteq S$ for $j < i$, or there is $T \in L'_i$, $T \supsetneq S$. We define $L_i$ to be the set of all non-visited sets from $L'_i$.

The proof of the following lemma in a more general form can be found in~\cite[Theorem 1]{KKS2015ComputingTheShortestResetWords}.

\begin{lemma}\label{lem:ibfs}
There exists a shortest reset word $w$ such that for any suffix $u$ of $w$ of length $i$, $\{q\}u^{-1} \in L_i$ for some $q \in Q$. The smallest $i$ such that $Q \in L_i$ is the reset length of the automaton.
\end{lemma}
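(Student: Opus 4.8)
The plan is to prove Lemma~\ref{lem:ibfs} by analysing the standard inverse breadth-first search (BFS) that computes the reset length, and showing that the families $L_i$ capture exactly the ``useful'' layers of that search. Recall the inverse-BFS tree: level~$0$ consists of those singletons that can be reached from a larger set by applying a single letter (equivalently, singletons $\{q\}$ with $|\{q\}a^{-1}|>1$ for some $a$), and level~$i$ is obtained from level~$i-1$ by taking all preimages $Sa^{-1}$, $a\in\Sigma$, discarding any set that has already appeared (or is contained in a set already appearing) at an earlier or equal level. A word $w$ of length $\ell$ with $Qw=\{q\}$ corresponds, reading $w$ backwards letter by letter, to a chain $\{q\}=S_0\subseteq S_1\subseteq\cdots\subseteq S_\ell=Q$ with $S_{k}=S_{k-1}a_k^{-1}$; the reset length is the least $\ell$ for which such a chain with $S_\ell=Q$ exists, i.e. the least $i$ with $Q$ appearing at level~$i$ of the search.

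First I would formalise the monotonicity observation underlying the pruning: if $S\subseteq T$ then $Sa^{-1}\subseteq Ta^{-1}$ for every letter $a$, and hence $|Sa^{-1}|\le|Ta^{-1}|$. Consequently, whenever a set $S$ appears during the search but is contained in some set $T$ that appears at the same or an earlier level, every extension reachable from $S$ is dominated by the corresponding extension from $T$; in particular any reset word routed through $S$ can be replaced by one routed through $T$ of no greater length. This justifies discarding the ``visited'' sets without changing the reset length, and it is exactly the definition of $L_i$ versus $L'_i$ in the excerpt. I would also note that we may start the search from $L_0$ rather than from all singletons, because in any shortest chain the second-to-last step $S_1=S_0a_1^{-1}$ already has $|S_1|>|S_0|=1$ (otherwise $a_1$ is superfluous), so $S_0\in L_0$; here we use that $w$ has minimal length.

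Then I would carry out the main induction. Claim: there is a shortest reset word $w$ such that for every suffix $u$ of $w$ with $|u|=i$ we have $\{q\}u^{-1}\in L_i$ for the synchronizing target state $q$. One proves by induction on $i$ that if $S_i:=\{q\}u^{-1}$ is \emph{not} in $L_i$, then it was pruned because some $T\supseteq S_i$ lies in $L_j$ with $j\le i$ (or $T\supsetneq S_i$ in $L_i'$, which by further pruning leads again to some set in an $L_{j}$, $j\le i$); splicing in the chain that realises $T$ and using the monotonicity above yields a reset word that is no longer and whose $i$-suffix preimage does lie in the appropriate family. Iterating this surgery over the finitely many levels produces the desired $w$. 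For the second sentence of the lemma: $Q$ is its own only preimage-superset, so $Q$ is never pruned, hence $Q\in L_i$ precisely when $Q$ appears at level~$i$ of the unpruned search, which by the correspondence above is exactly the reset length.

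The main obstacle I anticipate is making the pruning-and-splicing argument airtight when a set is pruned by a \emph{strictly} larger set at the \emph{same} level $i$ (the ``$T\in L'_i$, $T\supsetneq S$'' clause), since then $T$ itself may be pruned in turn; one has to argue that following the chain of strict containments terminates (cardinalities are bounded) at a set that genuinely survives in some $L_j$ with $j\le i$, and that the bookkeeping of which shortest word one ends up with stays consistent across all levels simultaneously. This is precisely the point where invoking \cite[Theorem 1]{KKS2015ComputingTheShortestResetWords} is cleanest, as the excerpt does; so in the write-up I would state the structural claims above, reduce to that theorem for the delicate termination/consistency bookkeeping, and keep the self-contained part to the monotonicity lemma and the $L_0$ reduction.
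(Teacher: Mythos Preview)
The paper does not actually prove this lemma: it simply states that ``the proof \ldots\ in a more general form can be found in~\cite[Theorem 1]{KKS2015ComputingTheShortestResetWords}'' and moves on. Your proposal therefore goes well beyond what the paper itself supplies.

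That said, your reconstruction is sound and is essentially the argument behind the cited theorem. The two ingredients you isolate --- monotonicity of preimages ($S\subseteq T\Rightarrow Sa^{-1}\subseteq Ta^{-1}$) and the observation that the last letter of a shortest reset word must strictly enlarge the singleton preimage, forcing $\{q\}\in L_0$ --- are exactly right, and the surgery/splicing you describe is the standard way to show that pruning dominated sets does not change the first level at which $Q$ appears. Your identified obstacle (a set in $L'_i$ pruned by a strictly larger set in $L'_i$, which may itself be pruned) is real but, as you note, harmless: the chain of strict containments is bounded by $|Q|$ and eventually lands in some $L_j$ with $j\le i$; since the original word was shortest, in fact $j=i$ each time, and the consistent chain witnessing membership of the surviving set in $L_i$ (which exists by an easy induction on $i$) furnishes the required shortest reset word. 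Your closing remark --- that for the bookkeeping one may as well invoke \cite[Theorem 1]{KKS2015ComputingTheShortestResetWords} --- brings you back exactly to what the paper does.
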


\begin{theorem}\label{thm:slowly3_reset_length}
For $n \ge 3$, the automata $\mathrsfs{M}_n$ and $\mathrsfs{M}'_n$ are irreducibly synchronizing. The first has reset length $n^2-3n+3$, and the second has reset length $n^2-3n+2$.
\end{theorem}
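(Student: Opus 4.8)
\emph{Overview.} The proof splits into three parts: an explicit short reset word for each automaton (giving the upper bounds on the reset length), the matching lower bounds via the inverse‑BFS families $L_i$ of Lemma~\ref{lem:ibfs}, and the irreducibility. Throughout, the key observation is that $a$ acts on $C:=\{q_2,\dots,q_n\}$ as the cyclic permutation $(q_2\,q_3\,\cdots\,q_n)$ and sends $q_1$ into the cycle at $q_2$, that $b$ absorbs $q_1$ into $q_2$, and that $c$ either transposes $q_1\leftrightarrow q_n$ (in $\mathrsfs{M}_n$) or sends $q_n$ to $q_1$ and fixes everything else (in $\mathrsfs{M}'_n$); no letter reduces rank except through one of the pairs $\{q_1,q_n\}$ or $\{q_1,q_2\}$, and $q_1$ never lies in the image of $a$ or $b$.

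\emph{Upper bounds.} For an arc $A\subseteq C$ with $q_1\notin A$, the word $a^{n-2}cb$ rotates $A$ by one step so that it straddles the edge $q_n\to q_2$, then $c$ replaces $q_n$ by $q_1$ (a swap in $\mathrsfs{M}_n$; since $q_1\notin A$, just a move in $\mathrsfs{M}'_n$), and finally $b$ absorbs $q_1$ into $q_2\in A$; the net effect is to turn $A$ into an arc of size $|A|-1$. Since $Qb=C$ and $Ccb=\{q_2,\dots,q_{n-1}\}$ in $\mathrsfs{M}_n$, while $Qc=\{q_1,\dots,q_{n-1}\}$ and $Qcb=\{q_2,\dots,q_{n-1}\}$ in $\mathrsfs{M}'_n$, iterating $a^{n-2}cb$ exactly $n-3$ times contracts the arc $\{q_2,\dots,q_{n-1}\}$ down to $\{q_2\}$. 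Hence $bcb\,(a^{n-2}cb)^{n-3}$ resets $\mathrsfs{M}_n$, of length $3+(n-3)n=n^2-3n+3$, and $cb\,(a^{n-2}cb)^{n-3}$ resets $\mathrsfs{M}'_n$, of length $2+(n-3)n=n^2-3n+2$; a one‑line induction on the arc size verifies that the images behave as claimed. (In particular both automata are synchronizing.)

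\emph{Lower bounds.} By Lemma~\ref{lem:ibfs} the reset length is the least $i$ with $Q\in L_i$; since $Q$ is the only subset of cardinality $n$, it suffices to locate when cardinality $n$ first appears. Compute $L_0=\{\{q_2\}\}$ for $\mathrsfs{M}_n$ and $L_0=\{\{q_1\},\{q_2\}\}$ for $\mathrsfs{M}'_n$, and then prove by induction the following description of $L_i$: each $L_i$ has at most two members, every one of them is an arc of $C$ or an arc of $C$ with $q_1$ adjoined; $a^{-1}$ merely rotates such a set around $C$, $b^{-1}$ adjoins $q_1$ (enlarging the set exactly when $q_2$ is present), $c^{-1}$ either transposes $q_1\leftrightarrow q_n$ (in $\mathrsfs{M}_n$) or adjoins $q_n$ to a $q_1$‑containing set (in $\mathrsfs{M}'_n$), and every other preimage is a singleton or is dominated (``visited'') by a set produced earlier. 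Let $\ell_s$ be the least $i$ for which some member of $L_i$ has cardinality $\ge s$. From the description one reads off that $\ell_3=3$ for $\mathrsfs{M}_n$ and $\ell_3=2$ for $\mathrsfs{M}'_n$ (in the latter $c^{-1}$ builds the first triple one level sooner), and that $\ell_{s+1}=\ell_s+n$ for $3\le s\le n-1$ — each such step ``inverts'' one block $a^{n-2}cb$: the canonical size‑$s$ set is rotated back around $C$ (losing $q_1$, becoming an arc of size $s-1$), spends the bulk of the $n$ levels rotating, and is then regrown by one application each of $b^{-1}$, $c^{-1}$ and $a^{-1}$. Therefore $\ell_n=\ell_3+(n-3)n$, which is $n^2-3n+3$ for $\mathrsfs{M}_n$ and $n^2-3n+2$ for $\mathrsfs{M}'_n$, and these are the reset lengths. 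The main obstacle is exactly this induction: checking the base levels ($i\le 3$), and verifying for every $s$ that the $n$ levels of a ``cycle'' are traversed precisely as claimed — above all that the pruning by ``visited'' kills every set that could reach $Q$ a step too early. This is kept finite, as the authors remark for the \v{C}ern\'y series, by the fact that only a bounded number of genuinely new sets is born per level.

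\emph{Irreducibility.} For $n\ge 4$, deleting $a$ leaves $\{q_2,\dots,q_{n-1}\}$ — a proper set of size $\ge 2$ — fixed pointwise by both $b$ and $c$ (and by the $c$ of $\mathrsfs{M}'_n$), so the automaton has a nontrivial invariant set and is not synchronizing. Deleting $b$, or deleting $c$: the state $q_1$ is never in the image of $a$ or $b$, so once a rank drop has occurred $q_1$ has left the image; in $\mathrsfs{M}_n$ it can be restored only by $c$ at the cost of $q_n$ leaving, and in the automaton with $c$ removed (and analogously in $\mathrsfs{M}'_n$) it cannot be restored at all, so in every case at most one rank drop is possible — too few to synchronize. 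The three parts together prove the theorem.
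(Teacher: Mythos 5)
Your plan coincides with the paper's at every stage: the same reset words (your prefix $bcb$ in place of $acb$ for $\mathrsfs{M}_n$ is an interchangeable variant the authors themselves note), the same lower bound via Lemma~\ref{lem:ibfs} and an induction on the families $L_i$, and essentially the same irreducibility analysis. The upper-bound and irreducibility parts are sound, with one caveat: your $a$-deletion argument requires $n\ge 4$, and for $n=3$ the irreducibility claim in fact fails ($\mathrsfs{M}_3$ with $a$ removed is still synchronized by $bcb$, and $\mathrsfs{M}'_3$ with $a$ removed by $cb$); this is a defect of the theorem as stated rather than of your argument, but silently restricting to $n\ge4$ without flagging it leaves the stated claim unaddressed.

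The genuine gap is in the lower bound, which is the substance of the theorem. You state an inductive description of the $L_i$ and the recursion $\ell_{s+1}=\ell_s+n$, but you explicitly defer its verification (``the main obstacle is exactly this induction''); that verification \emph{is} the paper's proof --- a level-by-level tracking of the at most two surviving sets (the paper's $S_1,S_2,T_1,T_2,T_3$) together with the check that every other preimage is ``visited''. Worse, the invariant you do state is wrong precisely where it matters: $a^{-1}$ does not ``merely rotate a set around $C$''. Since $q_2$ has the two $a$-preimages $q_1$ and $q_n$, applying $a^{-1}$ to a set containing $q_2$ but not $q_1$ increases its cardinality (e.g.\ $\{q_2\}a^{-1}=\{q_1,q_n\}$, and the paper's $T_2a^{-1}=T_3$ is one of the two sets realizing the first appearance of each new cardinality at offset $3$ of each period). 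Your own later phrase ``regrown by one application each of $b^{-1}$, $c^{-1}$ and $a^{-1}$'' contradicts ``merely rotates'', so the induction hypothesis as written is inconsistent with the numbers you extract from it; and nothing in the proposal establishes the crucial pruning claim that no set of cardinality $s$ can be produced before level $(s-3)n+3$ (resp.\ $(s-3)n+2$). Until that case analysis is carried out, only the upper bounds are proved.
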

\begin{proof}
One easily verifies that the word $acb(a^{n-2}cb)^{n-3}$ synchronizes $\mathrsfs{M}_n$, and has the length $3+(n-2+2)(n-3) = n^2-3n+3$.
Also the word $cb(a^{n-2}cb)^{n-3}$ synchronizes $\mathrsfs{M}'_n$, and has the length $n^2-3n+2$.

We show that there is no shorter reset word for $\mathrsfs{M}_n$.
By Lemma~\ref{lem:ibfs}, it is sufficient to show what is the smallest $i$ such that $Q_n \in L_i$, which is the length of the shortest reset words. Here only $q_2$ is a common end of more than one edge with the same label, so $L_0 = \{\{q_2\}\}$.

%We claim that $L_{in} = \{\{q_2,\ldots,q_{2+i}\}\}$ for $i=0,\ldots,n-3$. Moreover $L_{in+4} = \{\{q_{n-2},q_{n-1}\}\}$ for $i=0$ and $L_{in+4} = \{\{q_{n-2},q_{n-1},q_n,q_1,\ldots,q_{i-1}\}\}$ for $1 \le i \le n-4$.
We claim that for each $i$ with $0 \le i \le n-3$, $L_{in} = \{\{q_2,\ldots,q_{2+i}\}\}$, $L_{(i-1)n+4} = \{\{q_{n-2},q_{n-1},q_n,q_1,\ldots,q_{i-1}\}\}$ if $2 \le i \le n-3$, and $L_{(i-1)n+4} = L_4 = \{\{q_{n-2},q_{n-1}\}\}$ if $i=1$.
The proof follows by induction. Clearly for $i=0$ the claim holds.
Consider some $i$, and assume that the claim holds for $i' \le i$, so $L_{in} = \{\{q_2,\ldots,q_{2+i}\}\}$.
We will show the claim for $i+1$, that is, for $L_{(i+1)n}$ and $L_{in+4}$.

The action of $c^{-1}$ for the set from $L_{in}$ results in the same set, so we have $L_{in+1} = \{S_1,T_1\}$, where $S_1=\{q_n,q_1,\ldots,q_{1+i}\}$ was obtained by the action of $a^{-1}$ and $T_1=\{q_1,\ldots,q_{2+i}\}$ by the action of $b^{-1}$.
Consider the sets obtained from $S_1$. Observe that if a set contains both $q_n$ and $q_1$, then only the action of $a^{-1}$ can result in a non-visited set. If $i=0$ then $S_1 a^{-1} = \{q_n,q_1\}a^{-1} = \{q_{n-1}\}$ is a visited singleton. If $i \ge 1$ then $S_2 = S_1 a^{-1} = \{q_{n-1},q_n,q_1,\ldots,q_i\}$. But $S_2 a^{-1}$ is $\{q_{n-2},q_{n-1}\}$ if $i=1$, or $\{q_{n-2},q_{n-1},q_n,q_1,\ldots,q_{i-1}\}$ if $i \ge 2$; so $S_2 a^{-1}$ is visited by the assumption of $L_{(i-1)n+4}$.
Consider the sets obtained from $T_1$. Only $T_2 = T_1 c^{-1} = \{q_n,q_2,\ldots,q_{2+i}\}$ is non-visited. Then let $T_3 = T_2 a^{-1} =\{q_{n-1},q_n,q_1,\ldots,q_{1+i}\}$. For $T_2 b^{-1} = \{q_n,q_1,\ldots,q_{2+i}\}$ observe that in the next step it results either in $T_3$ or in itself. Only the action of $a^{-1}$ applied to $T_3$ results in a non-visited set, so $L_{in+4} = \{\{q_{n-2},q_{n-1}\}\}$ if $i=0$, and $L_{in+4} = \{\{q_{n-2},q_{n-1},q_n,q_1,\ldots,q_i\}\}$ if $i \ge 1$.
Now, if $i=0$ then only the action of $a^{-1}$ results in a non-visited set over the next $n-4$ steps resulting in $L_{n} = \{q_2,q_3\}$.
Similarly, if $i \ge 1$ then by the next $i-1$ steps by the action of $a^{-1}$ we have $\{q_{n-1-i},\ldots,q_n,q_1\}$.
Again, through the next $n-3-i$ steps only the action of $a^{-1}$ results in a non-visited set, and we finally have $L_{(i+1)n} = \{\{q_2,\ldots,q_{3+i}\}\}$.

From the claim it follows that $Q_n$ does not appear in $L$ for $i \le (n-3)n$, and $L_{(n-3)n} = \{\{q_2,\ldots,q_{n-1}\}\}$. Then applying $(acb)^{-1}$ or $(bcb)^{-1}$ results in $Q_n$, and there is no shorter such word as is easily verified. Hence $L_{(n-3)n+3} = \{\{Q_n\}\}$ and so $i = (n-3)n+3 = n^2-3n+3$ is the length such that $Q_n \in L_i$.

The proof for the automaton $\mathrsfs{M}'_n$ follows exactly in the same way, with the following two exceptions: $T_2 = \{q_n,q_1,\ldots,q_{2+i}\}$, and finally we apply $(cb)^{-1}$ to the set from $L_{(n-3)n}$, resulting in $Q_n$.

It remains to show that removing any letter in $\mathrsfs{M}_n$ ($\mathrsfs{M}'_n$) results in a non-synchronizing automaton. Indeed, removing the letter $a$ results in unconnected states $q_3,\ldots,q_{n-1}$.
The only compressible pairs of states are $\{q_1,q_n\}$ under $a$, and $\{q_1,q_2\}$ under $b$.
Observe that it is not possible to map the pair $\{q_{n-1},q_n\}$ to $\{q_1,q_n\}$: Only $a$ maps a state from $Q_n \setminus \{q_1,q_n\}$ to $\{q_1,q_n\}$, and it maps exactly one such state. However both $q_1$ and $q_n$ are mapped by $a$ to $Q_n \setminus \{q_1,q_n\}$. Hence removing the letter $b$ results in a non-synchronizing automaton.
Removing the letter $c$ makes $q_1$ unreachable from the other states, hence no pair can be compressed except $\{q_1,q_n\}$ and $\{q_1,q_2\}$.
\qed
\end{proof} 

It seems that when we admit more letters in the alphabet, then it is more difficult to find any series of irreducible  strongly connected slowly synchronizing automata.
Although there are many binary series with the reset length $n^2+O(n)$ (obtained by modifying the series from \cite{AGV2010,AGV2013}), it is only $\mathrsfs{M}_n$ and $\mathrsfs{M}'_n$ that are known over a ternary alphabet, and no such series is known over a larger alphabet. 
Thus it becomes an interesting problem to find more such series, if they exist at all.

\medskip
\noindent
{\bf Acknowledgment}
We are grateful to Jakub Kowalski and anonymous referees for careful proofreading.

%%%%%%%%%%%%%%%%%%%%%%%%%%%%%%%%%%%%%%%%%%%%%%%%%%%%%%%%%%%%%%%%%%%%%%%%%%%%%%%%%%%%%%%%%%%%%%%%%%%%%%%%%%%%%%%%%%%%%%%%
\bibliographystyle{splncs03}

%%%%%%%%%%%%%%%%%%%%%%%%%%%%%%%%%%%%%%%%%%%%%%%%%%%%%%%%%%%%%%%%%%%%%%%%%%%%%%%%%%%%%%%%%%%%%%%%%%%%%%%%%%%%%%%%%%%%%%%%
\end{document}